\renewcommand{\epsilon}{\varepsilon}
\newcommand{\eps}{\varepsilon}
\newtheorem{theorem}{Theorem}[section]
\newtheorem{lemma}[theorem]{Lemma}
\newtheorem{proposition}[theorem]{Proposition}
\newenvironment{proof}{\begin{trivlist} \item {\bf Proof:~~}}
   {\qed\end{trivlist}}
\def\FullBox{\hbox{\vrule width 6pt height 6pt depth 0pt}}
\def\qed{\ifmmode\qquad\FullBox\else{\unskip\nobreak\hfil
\penalty50\hskip1em\null\nobreak\hfil\FullBox
\parfillskip=0pt\finalhyphendemerits=0\endgraf}\fi}
\title{ 
\bf Faster Sampling from Log-Concave Densities over Polytopes  via Efficient Linear Solvers} 
\author{Oren Mangoubi \\ Worcester Polytechnic Institute  \and Nisheeth K. Vishnoi  \\Yale University}
\begin{document}
 \date{}
\maketitle

\begin{abstract}
We consider the problem of sampling from a log-concave distribution $\pi(\theta) \propto e^{-f(\theta)}$ constrained to a polytope $K:=\{\theta \in \mathbb{R}^d: A\theta \leq b\}$, where $A\in   \mathbb{R}^{m\times d}$ and $b \in \mathbb{R}^m$. 
The fastest-known algorithm \cite{mangoubi2022faster} for the setting when $f$ is $O(1)$-Lipschitz or $O(1)$-smooth runs in roughly $O(md \times md^{\omega -1})$ arithmetic operations, where the $md^{\omega -1}$ term arises because each Markov chain step requires computing a matrix inversion and determinant (here $\omega \approx 2.37$ is the matrix multiplication constant). 
We present a nearly-optimal implementation of this Markov chain with per-step complexity which is roughly the number of non-zero entries of $A$ while the number of Markov chain steps remains the same. The key technical ingredients are 1) to show that the matrices that arise in this Dikin walk change slowly, 2) to deploy efficient linear solvers that can leverage this slow change to speed up matrix inversion by using information computed in previous steps, and 3) to speed up the computation of the determinantal term in the Metropolis filter step via a randomized Taylor series-based estimator. 
This result directly improves the runtime for applications that involve sampling from Gibbs distributions constrained to polytopes that arise in Bayesian statistics and private optimization.
\end{abstract}

\tableofcontents

\newpage

\section{Introduction}

We consider the  problem of sampling from a log-concave distribution supported on a polytope:
 Given a polytope 
$K:=\{\theta \in \mathbb{R}^d: A\theta \leq b\}$,
where $A\in   \mathbb{R}^{m\times d}$ and $b \in \mathbb{R}^m$,
and a convex function $f:K \rightarrow \mathbb{R}$, output a sample $\theta \in K$ from the distribution 
$\pi(\theta) \propto e^{-f(\theta)}$.
This problem arises in many applications, including in Bayesian inference, differentially private optimization, and integration.
For instance, in Bayesian Lasso logistic regression,  $f(\theta) = \sum_{i=1}^n \ell(\theta^\top x_i)$ where $\ell$ is the logistic loss and $x_i$ are datapoints 
with $\|x_i\|_2 \leq 1$, 
and $K =\{ \theta \in \mathbb{R}^d : \|\theta\|_1 \leq O(1)\}$ is an $\ell_1$-ball (see e.g., \cite{tian2008efficient, BayesianLasso, kim2018logistic}).
The closely related optimization problem of minimizing a $L$-Lipschitz or $\beta$-smooth convex function $f$ constrained to a polytope arises in numerous contexts as well.
A special case is the setting where $f$ is linear on $K$, which corresponds to linear programming, since in this case $\beta=0$.
In applications where one must preserve privacy when minimizing $f$ constrained to $K$, sampling from the exponential mechanism of \cite{mcsherry2007mechanism}, where  $\pi(\theta) \propto e^{-\frac{1}{\epsilon}f(\theta)}$ on $K$, allows one to obtain optimal utility bounds under $\epsilon$-differential privacy (see also \cite{bassily2014private, munoz2021private,kapralov2013differentially}).
Depending on the application, the polytope $K$ could be e.g. the probability simplex, hypercube, or $\ell_1$-ball, where $m=O(d)$ and $\mathrm{nnz}(A) = O(d)$, as well as many applications where the number of constraints defining the polytope is $m \geq d^2$ and $\mathrm{nnz}(A) = O(m)$ (see e.g. \cite{hsu2014privately}, \cite{barvinok2021quick} for some examples).
Here $\mathrm{nnz}(M)$ denotes the number of non-zero entries of a matrix $M$.

This sampling problem, and its generalization to convex $K$, 
has been studied in multiple prior works.
One line of work \cite{kannan2012random}  \cite{narayanan2016randomized}, \cite{lee2018convergence}, \cite{cousins2018gaussian}, \cite{chen2017vaidya}, \cite{jia2021reducing} focuses on the special case ($f \equiv 0$). 
In particular,  \cite{narayanan2016randomized} gives an algorithm, the Dikin walk Markov chain, which takes $O(md \times m d^{\omega -1}) \times \log(\frac{w}{\delta})$ arithmetic operations to sample within total variation (TV) error $\delta >0$ from the uniform distribution on a polytope $K \subseteq \mathbb{R}^d$ given by $m$ inequalities from a $w$-warm start.
Here, $\omega=2.37\cdots$ is the matrix-multiplication constant and  a distribution $\mu$ is $w$-warm for $w \geq 1$ w.r.t.\ the target distribution $\pi$ if $\nicefrac{\mu(z)}{\pi(z)} \leq w$ for every $z \in K$.
Other works consider the more general problem when $\pi$ may not be the uniform distribution.
 Multiple works \cite{applegate1991sampling, frieze1994sampling,  frieze1999log,lovasz2006fast,lovasz2007geometry, brosse2017sampling} have given algorithms that apply in the general setting when $K$ is a convex body with access given by a membership oracle (or related oracles). 
 \cite{narayanan2017efficient} give an algorithm based on the Dikin-walk Markov chain to sample from any log-concave $\pi \propto e^{-f}$ on $K$ where $f$ is $L$-Lipschitz or $\beta$-smooth.
Building upon the work of \cite{narayanan2017efficient}, \cite{mangoubi2022faster} presented a Dikin walk which utilizes a ``soft-threshold'' regularizer,  
which takes $O((md + d L^2 R^2)\times \log(\frac{w}{\delta}) \times (md^{\omega-1} + T_f))$ 
 arithmetic operations to sample from an $L$-log-Lipschitz log-concave distribution $\pi$ on a polytope $K$ within error $\delta>0$ in the TV distance from a $w$-warm start.
Here $T_f$ is the number of arithmetic operations to compute the value of $f$.
Specifically, each iteration in their soft-threshold Dikin walk algorithm requires $md^{\omega-1}$ arithmetic operations to compute the (inverse) Hessian of the log-barrier function for the polytope $K$ and its determinant.

\textbf{Our contributions.}
 We show that the per-iteration cost of computing the (inverse) Hessian of the log-barrier with soft-threshold regularizer, and its determinant, at each step of the soft-threshold Dikin walk can be reduced from $md^{\omega-1}$ to $\mathrm{nnz}(A) + d^2$ (see Theorem \ref{thm_soft_threshold_Dikin_sparse}).
More specifically, our version of the Dikin walk algorithm (Algorithm \ref{alg_Soft_Dikin_Walk_sparse}) takes $\tilde{O}((md + d L^2 R^2) \log(\frac{w}{\delta})) \times (\mathrm{nnz}(A) + d^2  + T_f)$ arithmetic operations to sample within total variation error $O(\delta)$ from an $L$-log-Lipschitz log-concave distribution on a polytope $K$ defined by $m$ inequalities, and $\tilde{O}((md + d \beta R^2)  \log(\frac{w}{\delta})) \times (\mathrm{nnz}(A) + d^2 + T_f)$ arithmetic operations to sample from a $\beta$-log-smooth log-concave distribution on $K$. 
Compared to the implementation of the soft-threshold Dikin walk in \cite{mangoubi2022faster}, we obtain an improvement in the runtime for the case of $L$-log-Lipschitz or $\beta$-log-smooth log-concave distributions of at least $d^{\omega-2}$ in all cases where, e.g., $T_f = \Omega(d^2)$.
If one also has $\mathrm{nnz}(A) = O(m)$, the improvement is $d^{\omega-1}$ if $m \geq d^2$ and $md^{\omega-3}$ if $m \leq d^2$; see Table 1 for comparison to prior works.

The main challenge in improving the per-step complexity of the soft-threshold Dikin walk is that the current algorithm uses dense matrix multiplication to compute the Hessian of the log-barrier function with soft-threshold regularization, as well as its determinant, which requires $m d^{\omega - 1}$ arithmetic operations.
The fact that the Hessian of the log-barrier function oftentimes changes slowly at each step of interior point-based methods was used by \cite{karmarkar1984new}, \cite{vaidya1989speeding}, and  \cite{nesterov1991acceleration}, and later by \cite{lee2015efficient}, to develop faster linear solvers for the Hessian to improve the per-iteration complexity of interior point methods for linear programming.
To obtain similar improvements in the computation time of the soft-threshold Dikin walk, we need to show that the regularized barrier function proposed by  \cite{mangoubi2022faster} changes slowly as well.
The notion of a barrier function changing slowly with respect to the Frobenius norm proposed in \cite{laddha2020strong} suffices, but we have to show that this notion holds for the soft-threshold regularized log-barrier function.
To improve the per-iteration complexity of the soft-threshold Dikin walk, we show how to use the efficient inverse maintenance algorithm of \cite{lee2015efficient} to maintain a linear system solver for the Hessian of the log-barrier function with soft-threshold regularizer, and then show how to use this solver to efficiently compute a random estimate whose expectation is the Hessian determinant in $\mathrm{nnz}(A) + d^2$ arithmetic operations.
This is accomplished by first computing a randomized estimate for the log-determinant of the Hessian, and then converting this estimate into a randomized estimate for a ``smoothed'' Metropolis acceptance probability via a piecewise-polynomial series expansion for the acceptance probability.
We present a detailed overview of the techniques in Section \ref{sec:overview}.

\section{Main result}

\begin{theorem}[Main result]\label{thm_soft_threshold_Dikin_sparse}
There exists an algorithm (Algorithm \ref{alg_Soft_Dikin_Walk_sparse}) which, given the following inputs,
1) $\delta, R>0$ and either $L>0$ or $\beta>0$,  2) $A \in \mathbb{R}^{m \times d}$, $b\in \mathbb{R}^m$ that define a polytope $K := \{\theta \in \mathbb{R}^d : A \theta \leq b\}$  such that 
        $K$ is contained in a ball of radius $R$ and has nonempty interior,
       3) an oracle for the value of a convex function $f: K \rightarrow \mathbb{R}$, where $f$ is either $L$-Lipschitz or $\beta$-smooth,  and
4) an initial point sampled from a distribution supported on $K$ which is $w$-warm with respect to $\pi \propto e^{-f}$ for some $w>0$,
        outputs a point from a distribution $\mu$ where $\|\mu- \pi\|_{\mathrm{TV}} \leq \delta$.
       This algorithm takes at most 
        \begin{itemize}
        \item  
        $\tilde{O}((md + d L^2 R^2) \log(\frac{w}{\delta})) \times (\mathrm{nnz}(A) + d^2 + T_f)$ arithmetic operations in the setting where $f$ is $L$-Lipschitz,
        \item 
        or  $\tilde{O}((md + d \beta R^2)  \log(\frac{w}{\delta})) \times (\mathrm{nnz}(A) + d^2+ T_f)$ arithmetic operations in the setting where $f$ is $\beta$-smooth,
          \end{itemize}
      where $T_f$ is the number of arithmetic operations to evaluate $f$.
\end{theorem}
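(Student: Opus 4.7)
The plan is to modify the soft-threshold Dikin walk of \cite{mangoubi2022faster} so that each step costs $\tO(\mathrm{nnz}(A) + d^2 + T_f)$ rather than $\tO(md^{\omega-1} + T_f)$, while keeping the number of Markov chain steps unchanged. Since the mixing-time bounds of $md + dL^2R^2$ (or $md+d\beta R^2$) steps are inherited directly from \cite{mangoubi2022faster}, the theorem reduces to showing that a single iteration of the chain can be implemented within the claimed sparsity-aware budget and that the resulting process still has stationary distribution $\pi$ up to total variation error absorbed in $\delta$.

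First I would analyze the cost of drawing the Gaussian proposal $\theta' \sim \mathcal{N}(\theta, H(\theta)^{-1})$, where $H(\theta)$ is the Hessian of the soft-threshold regularized log-barrier. This Hessian has the form $H(\theta) = A^\top D(\theta) A + R_{\mathrm{soft}}(\theta)$, where $D(\theta)$ is a diagonal matrix built from the slacks $b_i - a_i^\top \theta$ and $R_{\mathrm{soft}}$ is the contribution of the soft-threshold regularizer, so it has the same structural form as the weighted log-barrier Hessians that appear in interior point methods. I would verify that, along one Dikin step of radius $\Theta(1/\sqrt{md+dL^2R^2})$, the weights $D(\theta)$ change slowly in the Frobenius-norm sense formalized by \cite{laddha2020strong}, which is exactly the hypothesis required by the inverse-maintenance data structure of \cite{lee2015efficient}. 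The only nontrivial point is that the soft-threshold piece is not globally smooth, so I would bound the slow-change constant of its weight function directly using the explicit form of the regularizer. Plugging into the Lee--Sidford data structure then yields an amortized per-step linear-solve cost of $\tO(\mathrm{nnz}(A) + d^2)$, which is enough to sample the Gaussian proposal and evaluate its density.

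The more delicate step is the Metropolis filter, whose ratio contains $\sqrt{\det H(\theta')/\det H(\theta)}$ and therefore appears to need a full determinant computation. I would avoid this in two stages. First, I would form unbiased randomized estimates of $\log\det H(\theta)$ and $\log\det H(\theta')$ using a Hutchinson trace estimator composed with a truncated power series for $\log H$, for example $\log H = -\sum_{k\ge 1} k^{-1}(I - H/\kappa)^k$ after appropriate rescaling; each term requires only a constant number of applications of the maintained linear solver, so each estimate costs $\tO(\mathrm{nnz}(A) + d^2)$. Second, because the acceptance probability $\min(1, e^{\Delta})$ is nonlinear in the log-ratio $\Delta$, I would convert the log-determinant estimator into an unbiased $[0,1]$-valued estimator of a smoothed acceptance probability through a randomized (Russian-roulette style) piecewise-polynomial expansion, and then invoke a pseudo-marginal argument to ensure that the resulting chain still preserves $\pi$, absorbing any smoothing error into the $\delta$ TV budget.

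The main obstacle I anticipate lies in the last paragraph: simultaneously controlling (i) the variance of the log-determinant estimator so that $\tO(1)$ repetitions suffice, (ii) the truncation error of the polynomial expansion of $\log H$, and (iii) the tail error of the randomized expansion of the acceptance function, all while keeping the estimated acceptance probability in $[0,1]$ almost surely so that the pseudo-marginal correctness argument goes through. The slow-change verification in the second paragraph is conceptually routine but still requires care because the soft-threshold regularizer of \cite{mangoubi2022faster} was designed to facilitate mixing, not inverse maintenance, so one must reprove the Frobenius-norm slow-change bound for $R_{\mathrm{soft}}$ from scratch.
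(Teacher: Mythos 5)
Your high-level plan matches the paper's almost exactly: prove a Frobenius-norm slow-change bound for the regularized log-barrier Hessian, feed it into the Lee--Sidford inverse-maintenance data structure to get $\tO(\mathrm{nnz}(A)+d^2)$ amortized linear-solve cost per step, build a randomized determinant estimator on top of the solver, and inherit the mixing bounds of \cite{mangoubi2022faster} up to a small TV slippage. Two of your concrete choices diverge from the paper, and one of them is where I think your plan would run into real trouble.

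On the slow-change bound: you propose to prove the Frobenius slow-change for the soft-threshold weights ``directly from the explicit form of the regularizer.'' The paper instead observes that $\Phi(\theta) = \alpha^{-1}\hat{A}^\top\hat{S}(\theta)^2\hat{A}$ with $\hat{A}=(A;I_d)$, so the regularizer becomes extra rows of the augmented Gram matrix with constant weights; but because $\Phi$ is \emph{not} the Hessian of any log-barrier (the extra weights are constant rather than squared slacks), the known strong-self-concordance bound of \cite{laddha2020strong} does not apply directly. The paper gets around this by showing $\Phi$ is the uniform limit of genuine log-barrier Hessians and passing the Laddha--Vempala bound to the limit (Lemmas \ref{Lemma_Frobenius_2} and \ref{proposition_slack}). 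Your ``direct'' route is plausible --- the constant rows contribute zero to the Frobenius change of the weight diagonal --- but you must be careful that the bound is w.r.t.\ the local norm $\|\cdot\|_{\Phi(\theta)}$ governing the \emph{larger} step-size of the soft-threshold walk, not $\|\cdot\|_{H(\theta)}$, so you cannot simply cite the unregularized result. Your own remark that this ``must be reproved from scratch'' is the right instinct.

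The genuine gap is in the acceptance-probability estimator. You want an unbiased, almost-surely $[0,1]$-valued estimator (via Russian roulette) so that an exact pseudo-marginal correctness argument applies. This is precisely where the obstruction lies: unbiased estimators of a nonlinear function such as $\mathrm{sigmoid}$ obtained by randomized series truncation are generically \emph{not} $[0,1]$-valued, and debiasing schemes that retain boundedness are not available here. The paper does not attempt this. Instead it (i) switches to a smoothed acceptance rule $\mathsf{a}(\theta,z)\propto\mathrm{sigmoid}\left(\tfrac12\log\det\Phi(z)-\tfrac12\log\det\Phi(\theta)\right)$; (ii) forms an exactly unbiased estimator $Y$ of the log-det difference from the integral representation $\log\det(\hat{A}^\top D\hat{A})-\log\det(\hat{A}^\top\hat{A})=\int_0^1 \mathrm{tr}\bigl(\Theta(\tau)^{-1}(\hat{A}^\top D\hat{A}-\hat{A}^\top\hat{A})\bigr)\,d\tau$ (one uniform $t$, one Gaussian Hutchinson vector; Proposition \ref{lemma_integral}), rather than truncating a power series for $\log$ as you suggest; (iii) plugs i.i.d.\ copies $Y_1,\dots,Y_\mathcal{N}$ into one of \emph{two overlapping series expansions} for $\mathrm{sigmoid}$ --- the Taylor series at $0$ (radius $\pi$) or the exponential series $\sum_k c_k e^{-kt}$ valid on $(0,\infty)$ --- with the choice made from $Y_1$ together with the Hanson--Wright concentration of $Y$; and (iv) clamps $X\mapsto\min(\max(X,0),1)$, accepting a controlled bias $O(\mathcal{N}\gamma)$ in $\mathbb{E}[X]$, and closes the correctness argument by \emph{coupling} the approximate chain to the exact chain of Lemma \ref{Lemma_TV_output} so the $O(T\mathcal{N}\gamma)$ total slippage is absorbed into $\delta$. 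You list precisely these variance/truncation/range issues as your anticipated obstacle; the lesson from the paper is that they are resolved by abandoning pseudo-marginal exactness in favor of a bias-plus-coupling argument and a data-dependent choice of series expansion, not by pushing harder on Russian roulette.
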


\noindent 
Theorem \ref{thm_soft_threshold_Dikin_sparse} improves  by a factor of $\frac{md^{\omega-1} + T_f}{\max(\mathrm{nnz}(A), d^2) + T_f}$ arithmetic operations on the previous bound of \cite{mangoubi2022faster} 
of $\tilde{O}((md + d L^2 R^2)\times (md^{\omega-1}+ T_f))$  arithmetic operations 
for sampling from a distribution $\propto e^{-f}$ where $f$ is $L$-Lipschitz from a $w$-warm start.
Here $\tilde{O}$ hides logarithmic factors in $\delta, w, L, R, d$.
When $f$ is instead $\beta$-smooth their bound is $\tilde{O}((md + d \beta R^2)\times (md^{\omega-1} + T_f)$, and the improvement is the same.
When $T_f \leq O(d^2)$  (as may be the case when evaluating $f$ requires computing inner products with $n=d$ datapoints in $\mathbb{R}^d$), the improvement is (at least) $d^{\omega-2}$.
If one also has that $A$ is $O(m)$-sparse, ($\mathrm{nnz}(A) = O(m)$, as is the case in applications where each constraint inequality involves $O(1)$ variables), the improvement is $d^{\omega-1}$ in the regime $m \geq \Omega(d^2)$ and $md^{\omega-3}$ if $m \leq O(d^2)$.

Theorem \ref{thm_soft_threshold_Dikin_sparse} also improves on the bound of (\cite{lovasz2006fast}; Theorem 1.1) of $\tilde{O}(d^2 (\nicefrac{R}{r})^2) (md + T_f)$ arithmetic operations by $\min((\frac{R}{r})^2, \frac{d}{r^2 L^2})$ when e.g. $T_f = \Omega(d^2)$ and $m=O(d)$.
In the setting where $f$ is $\beta$-smooth, the improvement is $\min((\frac{R}{r})^2, \frac{d}{r^2 \beta})$.

 In the Bayesian Lasso logistic regression example considered in \cite{mangoubi2022faster}, our algorithm takes $\tilde{O}(d^{4})$ operations from a $w$-warm start as  $f$ is  $\beta$-smooth and $L$-Lipschitz, with $\beta = L = n =m =d$, $R=1$, $r = \nicefrac{1}{\sqrt{d}}$, $\mathrm{nnz}(A) = O(d)$. This improves by $d^{\omega-2}$ 
 on their bound of $\tilde{O}(d^{2+\omega})$, 
and by $d$ on the bound of $\tilde{O}(d^5)$ implied by \cite{lovasz2006fast}.

Theorem \ref{thm_soft_threshold_Dikin_sparse} also 
improves upon the running time for $(\epsilon, 0)$-differentially private low-rank approximation by a factor of $d^{\omega-2}$ on the bound of $O(d^{2+\omega} \log(\frac{w}{\delta}))$ operations of  \cite{mangoubi2022faster}. 
Moreover, it improves by a factor of  $\frac{md^{\omega-1}}{\mathrm{nnz}(A) + d^2}$ on their bound of $O((md + d n^2 \eps^2) (\eps n + d  \mathrm{log}(\nicefrac{nRd}{(r\eps)}) (md^{\omega-1}+ T_f))$ arithmetic operations for the problem of finding a minimizer $\hat{\theta} \in K$ of a convex empirical risk function $f(\theta, x) = \sum_{i=1}^n \ell_i(\theta,x_i)$ under $(\varepsilon, 0)$-differential privacy, when the losses $\ell_i(\cdot, x)$ are $\hat{L}$-Lipschitz, for any $\hat{L}>0$ and dataset $x\in \mathcal{D}^n$. See the details in their paper.

\includegraphics[width=0.95\textwidth]{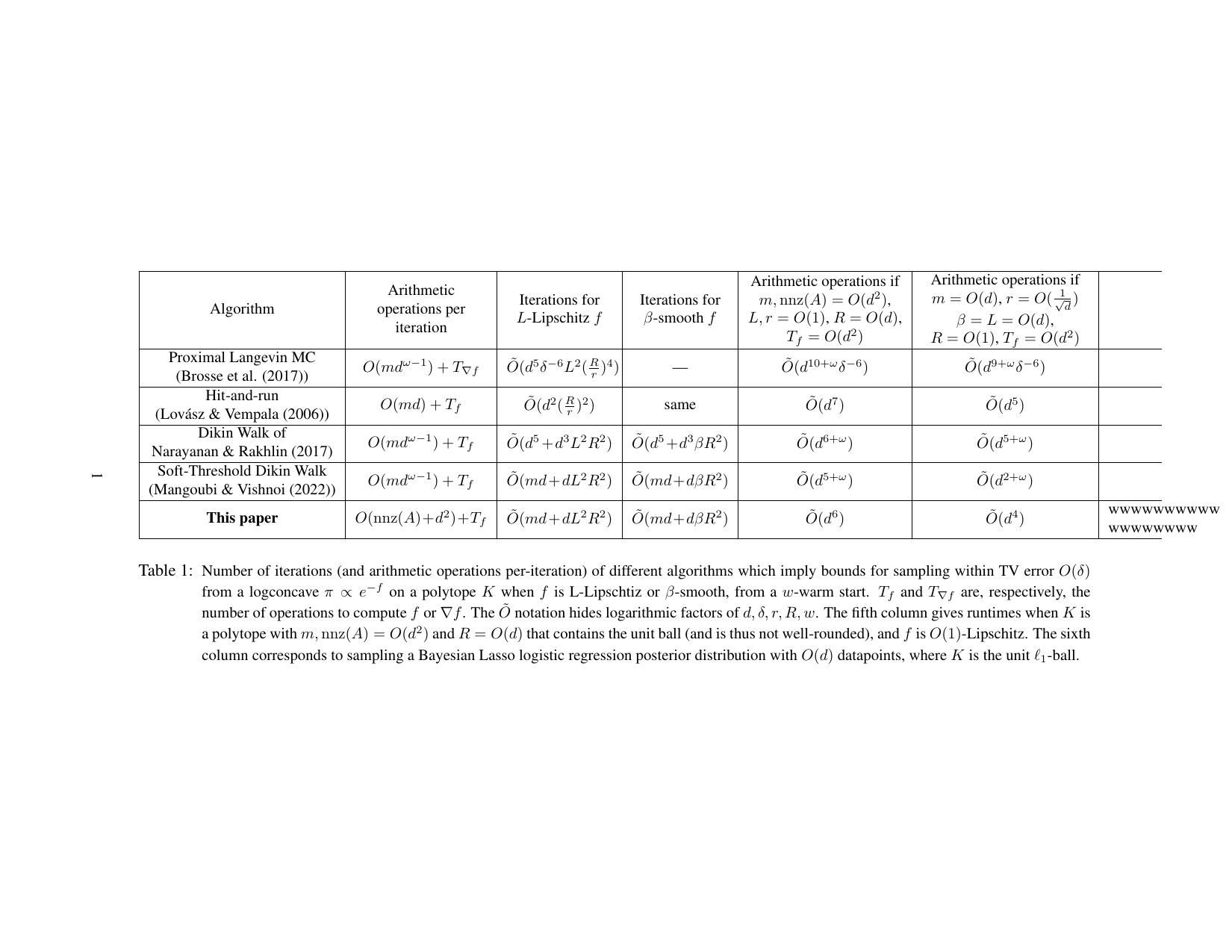}

\section{Algorithm}

    For input $\theta \in  \mathrm{Int}(K)$, let $S(\theta) := \mathrm{Diag}(A \theta -b)$,
and $ \Phi(\theta) :=  \alpha^{-1}\hat{A}^\top \hat{S}({\theta})^2 \hat{A}$,  where $\hat{S}(\theta) := \begin{pmatrix}
    S({\theta}) & 0 \\ 0 & \alpha^{\frac{1}{2}} \eta^{-\frac{1}{2}} I_d 
      \end{pmatrix}$ and $\hat{A} := \begin{pmatrix}
    A \\ I_d.
  \end{pmatrix}$
For input $W \in \mathbb{R}^{d\times d}$ and $t \in [0,1]$, let $\Gamma(W,t):= I_{m+d} + t(W-I_{m+d}).$ 
For a rectangular matrix $M$, a vector $v \in \mathbb{R}^d$, a diagonal matrix $D$, and a set of algorithmic parameters $P$,  {\bf Solve}$(v, D, M; P)$ outputs a vector $w$ which is (ideally) a solution to the system of linear equations $M^\top D M  w=v$. 
 Given an initial diagonal matrix $D_0$, ${\bf Initialize}(D_0, M)$ generates parameters $P$ corresponding to $D_0$.
 Given a new diagonal matrix $D$, and a set of parameters $P$, {\bf Update}$(D, M; P)$ updates the parameters $P$ and outputs some new set of parameters $P'$ corresponding to the new diagonal matrix $D$.
For any $S\subset \mathbb{R}^d$ denote the interior of $S$ by $\mathrm{Int}(S):= \{\theta \in S :  B(\theta, t) \subseteq S \textrm{ for some } t>0\}$.

\LinesNumbered
\begin{algorithm}[H]
\small
\caption{Fast implementation of soft-threshold Dikin walk} \label{alg_Soft_Dikin_Walk_sparse}
\KwIn{$m,d \in \mathbb{N}$, $A \in \mathbb{R}^{m \times d}$, $b \in \mathbb{R}^m$, which define the polytope $K := \{\theta \in \mathbb{R}^d: A \theta \leq b\}$}

\KwIn{Oracle returning the value of a convex function $f: K \rightarrow \mathbb{R}$. Initial point $\theta_0 \in \mathrm{Int}(K)$}

{\bf Hyperparameters:} $\alpha>0$; $\eta>0$;  $T\in \mathbb{N}$; $\mathcal{N} \in \mathbb{N}; \gamma>0$

Set $\theta \leftarrow \theta_0$,  $\hat{A}= \begin{pmatrix}
    A \\ I_d,
  \end{pmatrix}$, 
$a = \log \det(\hat{A}^\top \hat{A})$ \label{line_a1},  
 $P = {\bf Initialize}(\hat{S}(\theta)^2, \hat{A})$ \label{maintenance_1a},
 $Q_0 = {\bf Initialize}(I_d, \hat{A})$ \label{maintenance_1b}

\For{$i = 1, \ldots, T$}{

Sample a point $\xi \sim N(0,I_d)$  \label{line_a4}

Set $u =\hat{A}^\top  \hat{S}(\theta) \xi$ \label{line_a2}

Set $z = \alpha^{-\frac{1}{2}}{\bf Solve}(u, \hat{S}(\theta)^2, \hat{A} ; P)$   \{{\em Computes Gaussian  $z$ with covariance  $ \alpha^{-1}\hat{A}^\top \hat{S}({\theta})^2 \hat{A}$}\}  \label{maintenance_3a}

\uIf{$z \in \mathrm{Int}(K)$}{

Set $W= \hat{S}(\theta)^{-2} \hat{S}(z)$

\For{$j = 1, \ldots, {\mathcal{N} }$}{

Sample $v \sim N(0,I_d)$   \label{line_a5}

Sample $t$ from the uniform distribution on $[0,1]$. \label{line_a6}

Set $Q = {\bf Update}(\Gamma(W,t), \hat{A} \, ; \, Q_0)$ \label{maintenance_2b}

Set $w = \hat{A}^\top(W - I_d)\hat{A} v$ \label{line_a3}

Set $Y_{j} = v ^\top  \mathbf{Solve}(w, \, \,  \Gamma(W,t), \, \, \hat{A} \, ; \, \,  Q)   + a$ \label{maintenance_3b}

\qquad   \{{\em Computes a random variable $Y_{j}$ with expectation $\log \det \Phi(z) - \log\det\Phi(\theta)$}\}

}

\uIf{$\frac{1}{4} \leq Y_{1} < 2\log\frac{1}{\gamma}$}{

Set $X = 1 + \frac{1}{2}\sum_{k=1}^{2\mathcal{N}-1} \sum_{\ell = 0}^{{\mathcal{N} }} \left( (-1)^k \frac{1}{\ell !}\prod_{j=1}^\mathcal{\ell}(-2 k Y_{j})\right)$ \label{line_a8}}

\uIf{$Y_{1}< \frac{1}{4}$}{
Set  $c_0, \ldots, c_{\mathcal{N} }$ to be the first ${\mathcal{N} }$ coefficients of the Taylor expansion of $\frac{1}{1+e^{-t}}$ at $0$

Set $X =  \sum_{\ell = 0}^{{\mathcal{N} }} \left( c_\ell \prod_{j=1}^\mathcal{\ell}(Y_{j})\right)$  \label{line_a9}
}

\If{$Y_{1} \geq 2\log(\frac{1}{\gamma})$}{Set $X=1$}

Accept $\theta\leftarrow z$ with probability $\frac{1}{2}\min(\max( X, 0),1)\min \left (\frac{e^{-f(z)}}
{e^{-f(\theta)}}e^{\|z- \theta\|_{\Phi(\theta)}^2 - \|\theta- z\|_{\Phi(z)}^2}, 1 \right)$ \label{line_a7}

Set $P = {\bf Update}(\hat{S}(\theta)^2, \hat{A} \, ; \, P)$ \label{maintenance_2a}

}

\Else{Reject $z$}}
{\bf Output:} $\theta$
\end{algorithm}

In Theorem \ref{alg_Soft_Dikin_Walk_sparse}, we set $\gamma =  \delta \frac{1}{10^{20}(md +L^2 R^2)} \log^{1.02}(\nicefrac{w}{\delta})$.
 We set the step size hyperparameters 
$\alpha = \nicefrac{1}{(10^5 d)} \log^{-1}\left(\nicefrac{1}{\gamma}\right) \mbox{  and  } \eta = \nicefrac{1}{(10^4 d  L^2)}$ if $f$ is $L$-Lipschitz, and the number of steps to be $T= 10^9 \left( 2m \alpha^{-1} + \eta^{-1} R^{2} \right) \times \log(\nicefrac{w}{\delta}) \times \log^{1.01}(10^9 \left( 2m \alpha^{-1} + \eta^{-1} R^{2} \right) \times \log(\nicefrac{w}{\delta}))$
  When $f$ is $\beta$-smooth (but not necessarily Lipschitz), we instead set $\gamma =  \delta \frac{1}{10^{20}(md +\beta R^2)} \log^{1.01}(\nicefrac{w}{\delta})$, and
  $\alpha = \nicefrac{1}{(10^5 d)} \log^{-1}\left(\nicefrac{1}{\gamma}\right)$ and $\eta = \nicefrac{1}{(10^4 d  \beta)}$.
  Thus, in either case, we have $\gamma \leq \nicefrac{\delta}{(1000 T)}$.
Finally, we set the parameter $\mathcal{N}$, the number of terms in the Taylor expansions, to be $\mathcal{N} = 10\log(\frac{1}{\gamma})$.

\section{Overview of proof of main result}\label{sec:overview}
We give an overview of the proof of Theorem \ref{thm_soft_threshold_Dikin_sparse}.
Suppose we are given $A \in \mathbb{R}^{m \times d}$, $b\in \mathbb{R}^m$ that define a polytope $K := \{\theta \in \mathbb{R}^d : A \theta \leq b\}$  with non-empty interior and contained in a ball of radius $R>0$,  and an oracle for the value of a convex function $f: K \rightarrow \mathbb{R}$, where $f$ is either $L$-Lipschitz or $\beta$-smooth.
Our goal is to sample from $\pi \propto e^{-f}$ on $K$ within any TV error $\delta>0$ in a number of arithmetic operations which improves on the best previous runtime for this problem (when, e.g., $L, \beta,R =O(1)$), which was given for the soft-threshold Dikin walk Markov chain.
We do this by showing how to design an algorithm to implement the soft-threshold Dikin walk Markov chain with faster per-iteration complexity.

\paragraph{Dikin walk in the special case $f\equiv 0$.}
First, consider the setting where one wishes to sample from the uniform distribution ($f\equiv 0$) on a polytope $K$, where $K$ is contained in a ball of radius $R>0$ and contains a ball of smaller radius $r>0$.
 
\cite{kannan2012random} give a Markov chain-based algorithm, the Dikin walk, which uses the log-barrier function from interior point methods to take large steps while still remaining inside the polytope $K$.
The log-barrier $\varphi(\theta)= - \sum_{j=1}^m \log(b_j - a_j^\top \theta)$ for the polytope $K$ defines a Riemannian metric with associated norm $\|\theta\|_{H(\theta)} := \sqrt{u^\top H(\theta) u}$, where $H(\theta) := \nabla^2 \varphi(\theta) = \sum_{j=1}^m \frac{a_j a_j^\top}{(b_j - a_j^\top \theta)^2}$ is the Hessian of the log-barrier function.
At any point $\theta \in \mathrm{Int}(K)$, the unit ball with respect to this norm, $E(\theta) := \{z \in \mathbb{R}^d: \|z - \theta\|_{H(\theta)} \leq 1\}$,  referred to as the Dikin ellipsoid, is contained in $K$.
To sample from $K$, from any point $\theta  \in \mathrm{Int}(K)$, the Dikin walk proposes steps $z$ from a Gaussian with covariance  $\alpha H^{-1}(\theta)$, where $\alpha>0$ is a hyperparameter.
If one chooses $\alpha \leq O(\frac{1}{d})$, this Gaussian  concentrates inside $E(\theta) \subseteq K$, allowing the Markov chain to propose steps which remain inside $K$ w.h.p.
 
To ensure the stationary distribution of the Dikin walk is the uniform distribution on $K$, if $z \in \mathrm{Int}(K)$, the Markov chain accepts each proposed step $z$ with probability determined by the Metropolis acceptance rule
$ \min(\frac{p(z \rightarrow \theta)}{ p(\theta \rightarrow z)}, \, \, 1 ) = \frac{\sqrt{\mathrm{det}(H(\theta))} }{\sqrt{\mathrm{det}(H(z))}} e^{\|\theta - z \|_{H(\theta)}^2  - \|z - \theta \|_{H(z)}^2 }$,
where $p(\theta \rightarrow z) \propto (\mathrm{det}H(\theta))^{-\frac{1}{2}} e^{-\|\theta - z \|_{H(\theta)}^2}$ denotes the probability (density) that the Markov chain at $\theta$  proposes an update to the point $z$.

If one chooses $\alpha$ too large, each step of the Markov chain will be rejected with high probability.
Thus, ideally, one would like to choose $\alpha$ as large as possible such that the proposed steps are accepted with probability $\Omega(1)$.
To bound the terms $\nicefrac{\sqrt{\mathrm{det}(H(z))}}{\sqrt{\mathrm{det}(H(\theta))}}$ and $e^{\|z - \theta \|_{H(z)}^2 - \|\theta - z \|_{H(\theta)}^2 }$, \cite{kannan2012random} use the fact that the Hessian of the log-barrier changes slowly w.r.t.\ the local norm $\|\cdot \|_{H(\theta)}$.
More specifically, to bound the change in the determinantal term, they use the following property of the log-barrier, discovered by \cite{vaidya1993technique}, which says its log-determinant $V(\theta) := \log \mathrm{det} H(\theta)$  satisfies
$$\textstyle (\nabla V(\theta))^\top[H(\theta)]^{-1} \nabla V(\theta) \leq O(d) \qquad \qquad \forall \theta \in \mathrm{Int}(K). $$
Since the proposed update $(z-\theta)$ is Gaussian with covariance $\alpha H^{-1}(\theta)$,  if one chooses $\alpha = \frac{1}{d}$, by standard Gaussian concentration inequalities $(\theta-z)^\top \nabla V(\theta) \leq O(1)$ w.h.p.
Thus, $\log \mathrm{det}(H(z)) -   \log \mathrm{det}(H(\theta))  = V(z) - V(\theta) = \Omega(1)$, and   $\nicefrac{\sqrt{\mathrm{det}(H(z))}}{\sqrt{\mathrm{det}(H(\theta))}} = \Omega(1)$ w.h.p.

To bound the total variation distance between the target (uniform) distribution $\pi$ and the distribution $\nu_T$ of the Markov chain after $T$ steps, \cite{kannan2012random} use an isoperimetric inequality of \cite{lovasz2003hit} which is defined in terms of the Hilbert distance on the polytope $K$.
For any distinct points $u,v \in \mathrm{Int}(K)$ the Hilbert distance is
$$\textstyle \sigma(u,v) :=  \frac{\|u-v \|_2 \times \|p-q\|_2}{\|p-u \|_2 \times \|v-q\|_2},$$
where $p,q$ are endpoints of the chord through $K$ passing through $u,v$ in the order $p,u,v,q$.

To apply this isoperimetric inequality to the Dikin walk, which proposes steps of size roughly $O(\alpha)$ w.r.t. the local norm $ \|\cdot\|_{\alpha^{-1}H(\theta)}$, they show that the Hilbert distance satisfies
\begin{equation}\label{eq_T1}
 \textstyle   \sigma^2(u,v) \geq \frac{1}{m} \sum_{i=1}^m \frac{(a_i^\top(u-v))^2}{(a_i^\top u-b_i)^2}  \geq \frac{1}{m \alpha^{-1}} \|u - v \|_{\alpha^{-1}H(\theta)}^2.
\end{equation}
 They then use the isoperimetric inequality w.r.t.\ the Hilbert  distance together with standard conductance arguments for bounding the mixing time of Markov chains (see e.g. \cite{lovasz1993random}), to show that, if the Markov chain is initialized with a $w$-warm start, after $i$ steps, the total variation distance to the stationary distribution $\pi$ satisfies $$\|\nu_i - \pi \|_{\mathrm{TV}} \leq O(\sqrt{w}(1- m\alpha^{-1})^i) \qquad \forall i \in \mathbb{N},$$
provided that each step proposed by the Markov chain is accepted with probability $\Omega(1)$.
For $\alpha = O(\frac{1}{d})$, after $T = O(m\alpha^{-1} \log(\frac{w}{\delta})) = O(md \log(\frac{w}{\delta}))$ steps, $\|\nu_T - \pi \|_{\mathrm{TV}} \leq \delta$.

At each step in their algorithm, they compute the proposed update $z = \alpha^{-\frac{1}{2}} \sqrt{ H^{-1}(\theta)} \xi$ where $\xi \sim N(0,I_d)$, and the terms $\nicefrac{\sqrt{\mathrm{det}(H(z))}}{\sqrt{\mathrm{det}(H(\theta))}}$ and $e^{\|z - \theta \|_{H(z)}^2 - \|\theta - z \|_{H(\theta)}^2 }$ in the acceptance probability.
Writing  $H(\theta) = A^\top S(\theta)^{-2} A$ where $S(\theta) = \mathrm{Diag}(A \theta -b)$, these computations can be done in $O(m d^{\omega-1})$ arithmetic operations using dense matrix multiplication.
Their algorithm's runtime is then $T \times m d^{\omega-1} = O(md \times m d^{\omega-1} \times \log(\frac{w}{\delta}))$ arithmetic operations.

\paragraph{Dikin walks for sampling from Lipschitz/smooth log-concave distributions.} One can extend the Dikin walk to sample from more general log-concave functions on a polytope $K$, by introducing a term $e^{f(\theta) - f(z)}$ to the Metropolis acceptance probability.
This causes the Markov chain to have stationary distribution $\pi(\theta) \propto e^{-f(\theta)}$, since for every $\theta, z \in \mathrm{Int}(K)$,
\begin{equation*}
\textstyle \min\left(\frac{p(z \rightarrow \theta)}{ p(\theta \rightarrow z)}e^{f(\theta) - f(z)}, \, \, 1 \right) p(\theta \rightarrow z) \pi(\theta) = \min\left(\frac{p(\theta \rightarrow z)}{ p(z \rightarrow \theta)}e^{f(z) - f(\theta)}, \, \, 1 \right) p(z \rightarrow \theta) \pi(z).
\end{equation*}
However, the term $e^{f(\theta) - f(z)}$ can become exponentially small with the distance $\|z-\theta\|_2$ (and thus with the step-size  $\alpha$) even if, e.g., $f$ is $L$-Lipschitz.
Thus, to obtain a runtime polynomial in $L, d$, one may need to re-scale the covariance $\alpha H^{-1}(\theta)$ to ensure $e^{f(\theta) - f(z)} \geq \Omega(1)$.

One approach, taken by \cite{mangoubi2022faster}, is to propose steps with (inverse) covariance matrix equal to a regularized log-barrier Hessian, $\Phi(\theta) = \alpha^{-1} H(\theta) + \eta^{-1} I_d$, for some $\eta>0$.
Setting $\eta = \frac{1}{d L^2}$ ensures (by standard Gaussian concentration inequalities) that the proposed step $z$ satisfies $\|z- \theta\|_2 \leq O(1)$ w.h.p. and hence $e^{f(\theta) - f(z)} = \Omega(1)$ w.h.p.

To bound the total variation distance, in place of \eqref{eq_T1}, they show a lower bound on the Hilbert distance w.r.t.\ the local norm $\|\cdot\|_{\Phi(u)}$,
$$ \textstyle \sigma^2(u,v) 
        \geq \frac{1}{2} \frac{\|u-v\|_2^2}{R^2} + \frac{1}{2m} \sum_{i=1}^m \frac{(a_i^\top(u-v))^2}{(a_i^\top u-b_i)^2}
                \geq \frac{1}{2m \alpha^{-1} + 2 \eta^{-1} R^2}\|u-v\|^2_{\Phi(u)}.$$
This leads to a bound of $O(m\alpha^{-1} +  \eta^{-1} R^2) \log(\frac{w}{\delta})) = O(md +  L^2 R^2) \log(\frac{w}{\delta}))$ on the number of steps until the Markov chain is within $O(\delta)$ of $\pi$ in the total variation distance.
 In addition to the operations required in the special case when $f \equiv 0$, each step requires calling the oracle for $f$ which takes $T_f$ arithmetic operations.
 Thus, they obtain a runtime of $O((md + L^2 R^2) \times (m d^{\omega-1} + T_f) \times \log(\frac{w}{\delta}))$ arithmetic operations to sample from $\pi \propto e^{-f}$.

\paragraph{Faster implementation of matrix operations when $f\equiv 0$.}
When $f \equiv 0$, one can use that the log-barrier Hessian $H(\theta)$ changes slowly at each step of the Dikin walk to compute the proposed step, and determinantal terms in the acceptance probability, more efficiently.

The fact that the Hessian of the log-barrier function oftentimes changes slowly at each step of interior point-based methods was used by \cite{karmarkar1984new}, and later by \cite{vaidya1989speeding}, and \cite{nesterov1991acceleration} when developing faster linear solvers for $H(\theta)$ to use in interior point methods for linear programming.
Specifically, they noticed for many interior point methods, which take steps $\theta_1, \theta_2, \ldots \in \mathrm{Int}(K)$, the update to $H(\theta_i) = A^\top D_i A$ at each iteration $i$, where $D_i := S(\theta_i)$, is (nearly) low-rank, since in these interior point methods most of the entries of the diagonal matrix $D_{i+1}^{-1}D_{i}$ are very small at each step $i$.
This allowed them to use the Woodbury matrix formula to compute a low-rank update for the inverse Hessian $H(\theta_i)$, reducing the per-step complexity of maintaining a linear solver for $H(\theta_i)$ at each iteration $i$.
Specifically, the Woodbury matrix formula says
\begin{equation}
   \textstyle (M + UCV)^{-1} = M^{-1} -  M^{-1}U(C^{-1} + VM^{-1}U)^{-1}V^{-1}M),
\end{equation}
for any $M \in \mathbb{R}^{d \times d}$,  $U \in \mathbb{R}^{d \times k}$, $V \in \mathbb{R}^{k \times d}$, $C  \in \mathbb{R}^{k \times k}$, where $k$ may be much smaller than $d$.

\cite{lee2015efficient} obtain an improved algorithm for maintaining linear solvers for sequences of matrices $\{A^\top D_i A\}_{i=1,2,\ldots}$ under the assumption that $D_i$ changes slowly at each iteration $i$ with respect a weighted Frobenius norm,  $\|(D_{i+1} - D_i)D_i^{-1}\|_F \leq O(1)$.
This assumption is weaker than requiring most of the entries to be nearly constant at each step, as it only requires a (weighted) sum-of-squares of the change in the entries of $D_i$ to be $O(1)$. 
To obtain their method, they sample diagonal entries of $D_{i}$ with probability proportional to the (change in) the entries' leverage scores, a quantity used to measure the importance of the rows of a matrix  (for a matrix $M$ the $i$'th leverage score is defined as $[M(M^\top M)^{-1}M^\top]_{ii}$)).
Using these subsampled diagonal entries they show how to obtain a spectral sparsifier for the updates $A^\top (D_{i+1} -D_i)A$, allowing them to make low-rank updates to the matrix $A^\top D_i A$ at each iteration.
Specifically, for any sequence of diagonal matrices $D_1,D_2,\ldots $ satisfying
\begin{equation}\label{eq_T2}
\textstyle \|(D_{i+1} - D_i)D_i^{-1}\|_F \leq O(1)
\end{equation}
 at each $i \in \mathbb{N}$, they show that they can maintain a $O(\mathrm{nnz}(A) + d^2)$-time linear system solver for the matrices $A^\top D_i A$, at a cost of $O(\mathrm{nnz}(A) + d^2)$ arithmetic operations per-iteration.  (plus an initial cost of  $O(\mathrm{nnz}(A) + d^\omega)$ to initialize their algorithm).

 \cite{laddha2020strong} apply the fast linear solver of \cite{lee2015efficient} to the Dikin walk  in the special case where $f \equiv 0$.
Specifically, they show the sequence of log-barrier Hessians $A^\top D_i A$ where $D_i = S^2(\theta_i)$, also satisfies \eqref{eq_T2} when $\theta_i$ are the steps of the Dikin walk.
%
They then use the fast linear solver of \cite{lee2015efficient} to evaluate the matrix operations $(A^\top S(\theta_i)^{-2} A)^{-\frac{1}{2}} \xi$ and compute an estimate for $\mathrm{det}(A^\top S(\theta_i)^{-2} A)$ at each iteration.
This gives a per-iteration complexity of $\mathrm{nnz}(A) + d^2$ for the Dikin walk in the special case $f \equiv 0$.

\paragraph{Our work.}
To obtain a faster implementation of the (soft-threshold) Dikin walk in the general setting where $f$ is Lipschitz or smooth on $K$, we would ideally like to apply the efficient inverse maintenance algorithm of \cite{lee2015efficient} to obtain a fast linear solver for the log-barrier Hessian with soft-threshold regularizer $\Phi(\theta)= \alpha^{-1} A^\top S(\theta)^{-2} A + \eta I_d$.
Specifically, writing $\Phi(\theta) = \alpha^{-1}\hat{A}^\top \hat{S}(\theta)^2 \hat{A}$ where $\hat{A} := \begin{pmatrix}
    A \\ I_d,
  \end{pmatrix}$ and $\hat{S}(\theta) := \begin{pmatrix}
    S({\theta}) & 0 \\ 0 & \alpha^{\frac{1}{2}} \eta^{-\frac{1}{2}} I_d 
      \end{pmatrix}$, we would like to obtain a fast linear solver 
 for the sequence of matrices $\Phi(\theta_1), \Phi(\theta_2), \ldots$, where $\theta_1, \theta_2, \cdots$ are the steps of the soft-threshold Dikin walk, and use it to improve the per-iteration complexity of the walk. 
This poses two challenges:
 
1.  To obtain an $O(\mathrm{nnz}(A) + d^2)$-time linear solver with the efficient inverse maintenance algorithm, we need to show that $\Phi(\theta_i) = \alpha^{-1}\hat{A}^\top \hat{S}(\theta_i)^2 \hat{A}$ does not change too quickly at each step $i$ of the soft-threshold Dikin walk.  Specifically, we need to show, at each $i$ w.h.p.,
\begin{equation}\label{eq_T4}
\textstyle \|(\hat{S}^2(\theta_{i+1}) - \hat{S}^{2}(\theta_{i+1}))\hat{S}^{-2}(\theta_{i+1})\|_F \leq O(1).
\end{equation}
 Once we have a linear solver for $\Phi(\theta)$, this immediately gives us a way to compute the proposed updates of the Markov chain by solving a system of linear equations (see e.g. Section 2.1.1 of \cite{kannan2012random}).

2.  We also need to evaluate the  terms $\mathrm{det}(\Phi(\theta))$ in the acceptance probability.
 However, access to a linear solver for $\Phi(\theta)$ does not directly give a way to compute its determinant.

\paragraph{Bounding the change in the soft-threshold log-barrier Hessian.}
From concentration inequalities for multivariate Gaussian distributions, one can show that, with high probability, the Dikin walk proposes updates $z \leftarrow \theta$ which have length $O(\sqrt{d})$ in the local norm: $\|z - \theta \|_{\Phi(\theta)} = O(\sqrt{d})$ w.h.p.
To bound the change in the soft-threshold-regularized log-barrier Hessian matrix $\Phi(\theta)$ at each step of the walk, we would ideally like to show that the Frobenius norm of the derivative of this matrix changes slowly with respect to the local norm $\|\cdot \|_{\Phi(\theta)}$.
The soft-threshold-regularized log-barrier function $\Phi$ satisfies the self-concordance property $\mathrm{D}\Phi(\theta )[h] \preceq 2\sqrt{v^\top \Phi(\theta)v} \Phi(\theta)$ for any $\theta, z \in \mathrm{Int}(K)$, and $v \in \mathbb{R}^d$, where $\mathrm{D} M(\theta )[h]$ denotes the derivative of a matrix-valued function $M(\theta)$ in the direction $h$.
However, the self-concordance property does not directly imply that \eqref{eq_T4} holds, as \eqref{eq_T4} contains a bound in the Frobenius norm.
To overcome this issue, we show that (a rescaling of) the regularized log-barrier Hessian, $\Psi(\theta) := \alpha^{\frac{1}{2}} \Phi(\theta) = \nabla^2 \varphi(\theta) + \alpha^{\frac{1}{2}} \eta^{-\frac{1}{2}} I_d$ satisfies the following strengthening of the self-concordance property w.r.t. the Frobenius norm: 
\begin{equation}\label{eq_T5}
\textstyle    \|(\Psi(z) - \Psi(\theta))^{-1} \Psi^{-1}(\theta) \|_F \leq \frac{\|\theta - z \|_{\Psi(\theta)}}{(1 - \|\theta - z \|_{\Psi(\theta)})^2} \qquad \qquad \forall \theta, z \in \mathrm{Int}(K).
\end{equation}
\cite{laddha2020strong} show that \eqref{eq_T5} holds in the special case when $\Phi$ is replaced with the Hessian $H(\theta)$ of the log-barrier function without a regularizer.
 However, the soft-threshold log-barrier Hessian, $\Phi$, is not the Hessian of a log-barrier function for any set of equations defining the polytope $K$.
To get around this problem, we instead use the fact (from \cite{mangoubi2022faster}) that $\Phi$ is the limit of a sequence of matrices $\{H_j(\theta) \}_{j=1}^\infty$ where each $H_j$ is the Hessian of a log-barrier function for an increasing set of (redundant) inequalities defining the polytope $K$.
Combining these two facts allows us to show  Equation \eqref{eq_T5} holds for the Hessian $\Phi$ of the log-barrier function with soft-threshold regularizer (Lemma \ref{Lemma_Frobenius_2}).
 Next, we apply \eqref{eq_T5} to show that (Lemma \ref{proposition_slack}),
$$\textstyle \|(\hat{S}^2(z) - \hat{S}^{2}(\theta))\hat{S}^{-2}(\theta)\|_F    \leq     \|(\Psi(z) - \Psi(\theta))^{-1} \Psi^{-1}(\theta) \|_F     
\leq \frac{\alpha^{-\frac{1}{2}}\|\theta-z\|_{\Phi(\theta)}}{(1-\alpha^{-\frac{1}{2}}\|\theta-z\|_{\Phi(\theta)} )^2} 
  \leq O(\alpha^{-\frac{1}{2}} \sqrt{d}),$$
  w.h.p.
 As the step size $\alpha = \Theta(\nicefrac{1}{d})$, we get that $\|(\hat{S}^2(z) - \hat{S}^{2}(\theta))\hat{S}^{-2}(\theta)\|_F \leq O(1).$ w.h.p.
\paragraph{Computing a randomized estimate for the determinantal term.} To compute an estimate for the determinantal term in the Metropolis acceptance rule, we use a well-known method for estimating matrix-valued functions using polynomials (see e.g. \cite{barvinok2017approximating} \cite{rudelson2016singular}).
A related approach to the one presented here for computing the determinant in the special case when $f \equiv 0$ was given in \cite{laddha2020strong}, however it contains a number of gaps in the proof and algorithm.
We discuss these gaps, and the differences between their approach and the one given here, in Section \ref{appendix_comparison}.

Towards this, one can apply the linear solver for $\Phi(\theta)$ to compute a random estimate $Y$ with mean $\log \mathrm{det}( \Phi(\theta)) - \log  \mathrm{det}(\Phi(z))$:
$Y = v^\top (\hat{A}^\top ( t\hat{S}(\theta)^{-2} + (1-t)I_{m+d}) \hat{A})^{-1} \hat{A}^\top( \hat{S}(\theta)^{-2} - I_{m+d})\hat{A} v + \log\det \hat{A}^\top \hat{A}$,
Where $v \sim N(0,I_d)$ and $t \sim \mathrm{uniform}(0,1)$.
 While this provides an estimate for the log-determinant of $\Phi$, 
 it does not give an estimate with a mean equal to a Metropolis acceptance rule for $\pi$.

We first replace the Metropolis acceptance rule $$\min(\frac{p(z \rightarrow \theta)}{ p(\theta \rightarrow z)}e^{f(z) - f(\theta)}, \, \, 1 ) = \min\left(\frac{\sqrt{\mathrm{det}(\Phi(z))}}{\sqrt{\mathrm{det}(\Phi (\theta))}} e^{\|z- \theta\|_{\Phi(\theta)}^2 - \|\theta- z\|_{\Phi(z)}^2} e^{f(\theta)- f(z)}, 1\right)$$ with a (different)  Metropolis acceptance rule, $$\mathsf{a}(\theta, z) := \left(1 +(\frac{\mathrm{det}(\Phi(z))}{\mathrm{det}(\Phi(\theta))})^{-\frac{1}{2}}\right)^{-1} \times \min \left (\frac{e^{-f(z)}}
{e^{-f(\theta)}} \times e^{\|z- \theta\|_{\Phi(\theta)}^2 - \|\theta- z\|_{\Phi(z)}^2}, \, \, 1 \right),$$ whose determinantal term is a smooth function of $\mathrm{det}(\Phi (\theta))$ and $\mathrm{det}(\Phi (z))$:
\begin{equation}\label{eq_T6}
    \textstyle \left(1 +(\frac{\mathrm{det}(\Phi(z))}{\mathrm{det}(\Phi(\theta))})^{-\frac{1}{2}}\right)^{-1} = \mathrm{sigmoid}\left(\frac{1}{2}\log \mathrm{det}(\Phi(z)) - \frac{1}{2}\log \mathrm{det}(\Phi(\theta))\right),
    \end{equation}
where $\mathrm{sigmoid}(t):= \frac{1}{1+e^{-t}}$.
Since $\mathsf{a}(\theta, z) p(\theta \rightarrow z)  \pi(\theta) = \mathsf{a}(z, \theta) p(z \rightarrow \theta) \pi(z)$, this acceptance rule preserves the target stationary distribution $\pi(\theta) \propto e^{-f(\theta)}$ of the Markov chain.
 
Next, we would ideally like to plug i.i.d.\ samples $Y_1,\cdots, Y_n$  of mean $\log \mathrm{det} \Phi(\theta) - \log \mathrm{det} \Phi(z)$ into a Taylor series for $\mathrm{sigmoid}(t)$ to obtain an estimate with mean equal to the l.h.s.\ of \eqref{eq_T6}.
Unfortunately, the Taylor series of $\mathrm{sigmoid}(t)$ at $0$ has finite region of convergence $(-\pi, \pi)$.

While one can use convexity of the function $\log \mathrm{det}(\Phi(\theta))$, together with a bound on the gradient of $\log \mathrm{det}(\Phi(\theta))$,   to show that $\frac{1}{2}\log \mathrm{det}(\Phi(z)) - \frac{1}{2}\log \mathrm{det}(\Phi(\theta)) > - \pi$ w.h.p., one may have that $\frac{1}{2}\log \mathrm{det}(\Phi(z)) - \frac{1}{2}\log \mathrm{det}(\Phi(\theta)) > \pi$ with probability $\Omega(1)$.

To get around this problem, we use two different series expansions for $\mathrm{sigmoid}(t)$ with overlapping regions of convergence:
\begin{enumerate}
\item a Taylor series expansion centered at $0$ with region of convergence $(-\pi, \pi)$,
\item a series expansion ``at $+ \infty$” with region of convergence $(0, +\infty)$, which is a polynomial in $e^{-t}$:  $\mathrm{sigmoid}(t) = \sum_{k=0}^\infty c_k e^{-k t}$.
\end{enumerate}
 
To determine which series expansion to use at each step of the Dikin walk, we show that each random estimate $Y$ concentrates in the ball of radius $\frac{1}{8}$ about its mean $\mathbb{E}[Y]  = \log \mathrm{det}(\Phi(z)) - \log \mathrm{det}(\Phi(\theta))$ w.h.p.\ (Lemma \ref{lemma_HansonWright_bound}).
As the regions of convergence of the two series expansions have intersection $(0, \pi)$, this allows us to use the value of the random estimate $Y$ to select a choice of series expansion that, with high probability, contains $\mathbb{E}[Y]   = \log \mathrm{det}(\Phi(z)) - \log \mathrm{det}(\Phi(\theta))$.

\paragraph{Bounding the number of arithmetic operations.}
We have already shown that computing the proposed update $z$ (Line \ref{maintenance_3a} of Algorithm \ref{alg_Soft_Dikin_Walk_sparse}) and computing each random estimate $Y_j$ for the log-determinant (Line \ref{maintenance_3b}) can be done in $O(\mathrm{nnz}(A) +d^2)$ arithmetic operations at each step of the Markov chain by using the efficient inverse maintenance linear solver.

As both series expansions converge exponentially fast in the number of terms, only $\log(\nicefrac{1}{\delta})$ terms in the series expansion are needed to compute an estimate $X$ (Lines \ref{line_a8} and \ref{line_a9}) with mean within error $O(\delta)$ of the determinantal term in the Metropolis acceptance probability
$\textstyle \left|\mathbb{E}[X]- \left(1 +(\frac{\mathrm{det}(\Phi(z))}{\mathrm{det}(\Phi(\theta))})^{-\frac{1}{2}}\right)^{-1}\right| \leq O(\delta).$
Thus, the number of calls to the linear solver at each Markov chain step is $O(\log(\nicefrac{1}{\delta}))$, taking $O((\mathrm{nnz}(A) +d^2) \log(\nicefrac{1}{\delta})$ arithmetic operations.

Computing the terms $f(\theta)$ and $f(z)$ in the Metropolis acceptance rule requires two calls to the oracle for $f$, which takes $O(T_f)$ arithmetic operations, at each step of the Markov chain.
 The other steps in the Algorithm take no more time to compute.
 In particular, while several ``initialization'' steps for the linear solver and other computations (Lines 4-6) require $O(md^{\omega-1})$ arithmetic operations, these computations are not repeated at each step of the Markov chain, and thus do not change the overall runtime.

Since the Markov chain is run for roughly $T= O((md +L^2R^2) \log(\nicefrac{w}{\delta})$ steps, the runtime is roughly $O(T((\mathrm{nnz}(A) +d^2)\log(\nicefrac{1}{\delta} + T_f)) =  O((md +L^2R^2) (\mathrm{nnz}(A) +d^2 + T_f)  \log^2(\nicefrac{w}{\delta})$.

\paragraph{Bounding the total variation distance.}  \cite{mangoubi2022faster} show that the soft-threshold Dikin walk outputs a point with distribution $\mu_T$ within TV distance $O(\delta)$ of the target distribution $\pi$ after roughly $T= O((md +L^2R^2) \log(\nicefrac{w}{\delta})$ steps; the same TV bound holds if one replaces their acceptance rule with the smooth acceptance rule we use.

 Our bound on the TV distance follows directly from their bound, with some minor adjustments.
 This is because, while we give a more efficient algorithm to {\em implement} the soft-threshold Dikin walk Markov chain, the Markov chain itself is (approximately) the same as the Markov chain considered in their work.
 The main difference (aside from our use of a smoothed Metropolis acceptance rule) is that the (mean of) the randomized estimate we compute for the acceptance probability is accurate to within some small error $O(\delta)$, since we only compute a finite number of terms in the Taylor series.

Thus, we can define a coupling between the Markov chain $\theta_1, \theta_2, \ldots, \theta_T$ generated by our algorithm and the Markov chain $\mathcal{Y}_1,\mathcal{Y}_2 ,\ldots \mathcal{Y}_T$ defined in their paper (with smoothed acceptance rule), such that $\theta_i = \mathcal{Y}_i$ at every $i \in \{1,\cdots, T\}$ with probability $\geq 1- T \delta$.
 This implies that the distribution $\mu$ of the last step of our Markov chain is within total variation distance $O(T\delta)$ of the distribution $\nu_T$ of $\mathcal{Y}_T$.
Thus, $\|\mu - \pi \|_{\mathrm{TV}} \leq \|\mu - \nu_T \|_{\mathrm{TV}} + \|\nu_T - \pi \|_{\mathrm{TV}} \leq O(T\delta) + O(\delta)$.
Plugging in $\nicefrac{\delta}{T}$ in place of $\delta$, we get $\|\mu - \pi \|_{\mathrm{TV}} \leq \delta$.

\section{Preliminaries needed for the proof}\label{sec_preliminaries}

The following lemma of \cite{lee2015efficient} allows us to maintain an efficient linear systems solver for the Hessian of the regularized barrier function $\Phi$:

\begin{lemma}[Efficient inverse maintenance, Theorem 13 in \cite{lee2015efficient},] \label{lemma_maintenance}
 Suppose that a sequence of matrices $A^\top C^{(k)}A$ for the inverse maintenance problem, where the sequence of diagonal matrices $C^{(1)},C^{(1)}\ldots$, with $C^{(k)} = \mathrm{diag}(c_1^{(k)}, \ldots,c_d^{(k)})$, satisfies
   $     \sum_{i=1}^d \left( \frac{c_i^{(k+1)}-c_i^{(k)}}{c_i^{(k)}}\right)^2 = O(1)$  for all $k \in \mathbb{N}.
 $
Then there is an algorithm that maintains a $\tilde{O}(\mathrm{nnz}(A) + d^2)$ arithmetic operation linear system solver for the sequence of matrices $\{A^\top C^{(k)}A\}_{k=1}^T$  for $T$ rounds in a total of $\tilde{O}(T(\mathrm{nnz}(A) + d^2) + d^\omega)$ arithmetic operations over all rounds $T$ rounds.
\end{lemma}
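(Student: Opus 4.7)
The plan is to maintain, across all $T$ rounds, an explicit factorization of a matrix $A^\top \tilde{C}^{(k)} A$ that serves as a spectral $2$-approximation to $A^\top C^{(k)} A$, and then use this as a preconditioner for an iterative linear-system solver (e.g.\ preconditioned Richardson or Chebyshev iteration). A constant spectral approximation implies each solve converges to relative accuracy $\epsilon$ in $O(\log(1/\epsilon))$ iterations, each of which costs one matvec against $A^\top C^{(k)} A$ (which is $O(\mathrm{nnz}(A))$ via $A^\top(C^{(k)}(Ax))$) and one solve against the preconditioner (which is $O(d^2)$ using the maintained factorization), giving the claimed $\tilde O(\mathrm{nnz}(A) + d^2)$ per-solve cost.

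First I would initialize: compute $A^\top C^{(0)} A$ explicitly and factor it (Cholesky), which costs $O(\mathrm{nnz}(A) + d^\omega)$ once and accounts for the additive $d^\omega$ term. I would also precompute, to constant multiplicative accuracy, the leverage scores $\tau_i^{(0)}$ of the rows of $(\tilde{C}^{(0)})^{1/2} A$ via a Johnson–Lindenstrauss sketch applied with the current factorization. At each round I replace $\tilde{C}^{(k)}$ by $\tilde{C}^{(k+1)}$ by resampling each coordinate $i$ whose relative change $|\Delta c_i / c_i^{(k)}|$ exceeds a leverage-score-dependent threshold; coordinate $i$ is marked for update with probability $p_i \propto \tau_i \cdot (\Delta c_i / c_i^{(k)})^2 \cdot \log d$ so that a matrix-Bernstein / matrix-Chernoff argument shows $A^\top \tilde{C}^{(k+1)} A \preceq 2 A^\top C^{(k+1)} A$ and vice versa with high probability. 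Each marked coordinate corresponds to a rank-one change of $A^\top \tilde{C} A$, absorbed into the factorization by a Woodbury/Sherman–Morrison update at cost $O(d^2)$ per coordinate plus $O(\mathrm{nnz}(A))$ total to read the affected rows of $A$. Leverage scores are refreshed by the same JL sketch applied every $\Theta(\log d)$ rounds, using the current preconditioner to solve the required systems.

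The main obstacle — and the crux of the whole argument — is the amortization: showing that over all $T$ rounds the total expected number of rank-one updates is $\tilde O(T)$, so that the per-round update cost is $\tilde O(d^2)$ rather than $\tilde O(d \cdot d^2)$. The hypothesis $\sum_i (\Delta c_i^{(k)}/c_i^{(k)})^2 = O(1)$ together with $\sum_i \tau_i = d$ does not by itself yield this directly; one needs the standard interior-point-style potential argument from \cite{lee2015efficient}, tracking a potential $\sum_i \Phi(\tilde c_i / c_i)$ whose drift per round is bounded using the Frobenius-change hypothesis, and whose crossings of the resampling threshold can be charged to the expected number of updates. Combined with a matrix concentration argument for the spectral approximation and a union bound over rounds, this yields the total bound of $\tilde O(T(\mathrm{nnz}(A)+d^2) + d^\omega)$. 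The remaining routine pieces — choosing the iterative solver, bounding the condition number, handling the $O(\log(1/\epsilon))$ inner iterations — follow standard preconditioning theory.
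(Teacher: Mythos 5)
The paper does not prove this lemma at all; it is stated in the preliminaries as a black-box import of Theorem~13 of \cite{lee2015efficient}, and the authors give no argument of their own. Your sketch is a faithful high-level reconstruction of the Lee--Sidford proof following essentially the same route: maintain a diagonal approximation $\tilde C^{(k)}$ giving a constant-factor spectral preconditioner for $A^\top C^{(k)} A$, decide which coordinates to update using leverage scores, absorb updates into the maintained factorization via low-rank (Woodbury) corrections, periodically refresh leverage-score estimates by JL sketching, and bound the amortized per-round work by a potential-function argument that converts the hypothesis $\sum_i (\Delta c_i/c_i)^2 = O(1)$ into an $\tilde O(T)$ bound on the total number of coordinate updates. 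You correctly identify the amortization as the crux and honestly defer it to \cite{lee2015efficient} rather than claiming to have proved it; that is the right posture, since the full potential argument and the matrix-Chernoff bookkeeping are substantial. Two small cautions on the details: \cite{lee2015efficient} batch updates and occasionally perform a full $d^\omega$ recomputation rather than doing a strictly one-coordinate-at-a-time Sherman--Morrison update, and the exact sampling probability is not quite $p_i \propto \tau_i (\Delta c_i/c_i)^2 \log d$ as you wrote; but these are implementation-level divergences from the cited construction, not gaps in the logic, and the claimed complexity is the same either way.
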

For any symmetric positive definite matrix $M \in \mathbb{R}^{d\times d}$, define $\|h\|_M := h^\top M h$.
The following lemma is useful in bounding the change in the log-barrier Hessian:
\begin{lemma}[Lemmas 1.2 \& 1.5 in \cite{laddha2020strong}] \label{Lemma_frobenius_strong_self_concordance}
    Let $K:=\{\theta \in \mathbb{R}^d: A\theta \leq b\}$, where $A\in   \mathbb{R}^{m\times d}$ and $b \in \mathbb{R}^m$ such that $K$ is bounded with non-empty interior.
    Let $H(\theta) = \sum_{j=1}^m \frac{a_j a_j^\top}{(b_j - a_j^\top \theta)^2}$ be the Hessian of the log-barrier function for $K$.
    Then for any $x,y \in \mathrm{Int}(K)$ with $\|x-y\|_{H(x)} <1$, we have
    $    \|H^{-\frac{1}{2}}(x)(H(y)-H(x))H^{-\frac{1}{2}}(x)\|_F \leq \frac{\|x-y\|_{H(x)}}{(1-\|x-y\|_{H(x)} )^2}.$
\end{lemma}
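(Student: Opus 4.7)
The plan is to conjugate by $H(x)^{-1/2}$ so that $V(H(y)-H(x))V$ becomes a weighted quadratic form in an orthogonal projection, bound its Frobenius norm via a leverage-score identity, and then integrate the resulting infinitesimal estimate along the segment from $x$ to $y$. Writing $H(\theta) = A^\top S(\theta)^{-2}A$ with $S(\theta) = \mathrm{diag}(b - A\theta)$, I set $V := H(x)^{-1/2}$ and $B := S(x)^{-1} A V \in \R^{m \times d}$. Then $B^\top B = V H(x) V = I_d$, so $P := BB^\top$ is the rank-$d$ orthogonal projection onto the column span of $B$; its diagonal entries $\sigma_j := P_{jj}$ are the Dikin leverage scores at $x$, satisfying $\sigma_j \in [0,1]$ and $\sum_j \sigma_j = d$. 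With $T := S(x)S(y)^{-1}$ diagonal, direct computation yields $V H(y) V = B^\top T^2 B$, so $V(H(y)-H(x))V = B^\top D B$ for $D := T^2 - I_m$.

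The key estimate is the Frobenius-to-leverage inequality: for any diagonal $D$,
\[
\|B^\top D B\|_F^2 \;=\; \mathrm{tr}(DPDP) \;=\; \sum_{i,j} d_i d_j P_{ij}^2 \;\leq\; \sum_i d_i^2 \,\sigma_i,
\]
where the last step uses Cauchy--Schwarz on $d_i d_j$ together with the projection identity $\sum_j P_{ij}^2 = [P^2]_{ii} = \sigma_i$. This converts a Frobenius-norm estimate into a dimension-free leverage-weighted sum. Applied pointwise along the path $\theta_s := x + s(y-x)$, using the explicit formula $H'(\theta)[y-x] = 2 A^\top S(\theta)^{-2}\,\mathrm{diag}\!\bigl(a_j^\top(y-x)/s_j(\theta)\bigr)\, A$ and $\sigma_j(\theta_s) \leq 1$, the same computation produces the strong self-concordance bound $\|H(\theta_s)^{-1/2} H'(\theta_s)[y-x] H(\theta_s)^{-1/2}\|_F \leq 2\|y-x\|_{H(\theta_s)}$.

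Next I integrate along the segment. By the fundamental theorem of calculus, $V(H(y)-H(x))V = \int_0^1 V H'(\theta_s)[y-x] V\, ds$. Converting the $H(\theta_s)$-normalization above to $V$-normalization via standard self-concordance, i.e.\ $H(\theta_s) \preceq H(x)/(1-sr)^2$ and $\|y-x\|_{H(\theta_s)} \leq r/(1-sr)$ where $r := \|x-y\|_{H(x)}$, gives $\|V H'(\theta_s)[y-x] V\|_F \leq 2r/(1-sr)^3$. Integrating $s$ from $0$ to $1$ then produces a bound of the form $r/(1-r)^2$, matching the shape of the stated inequality.

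The main obstacle is the Frobenius-to-leverage step: the naive operator-norm route bounds $\|V(H(y)-H(x))V\|_F$ by $\sqrt{d}\cdot \|V(H(y)-H(x))V\|_{\mathrm{op}}$, which via ordinary self-concordance is too weak by a factor of $\sqrt{d}$ and would destroy the $O(\mathrm{nnz}(A)+d^2)$ per-step cost the paper is trying to obtain. The dimension-free Frobenius bound crucially exploits that $P$ is an orthogonal projection, i.e.\ $P^2 = P$, which in turn relies on $H = A^\top S^{-2}A$ having a factored diagonal slack structure -- a feature unavailable for general self-concordant barriers and the source of the adjective ``strong'' in strong self-concordance.
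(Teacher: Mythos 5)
The paper does not actually give a proof of this lemma; it is imported wholesale as Lemmas~1.2 and~1.5 of \cite{laddha2020strong}, so there is no in-paper argument to compare against. Your proof is the right one in spirit: the conjugation $B := S(x)^{-1} A H(x)^{-1/2}$, the identity $B^\top B = I_d$ making $P = BB^\top$ an orthogonal projection, the trace identity $\|B^\top D B\|_F^2 = \mathrm{tr}(DPDP) = \sum_{i,j} d_i d_j P_{ij}^2$, and the AM--GM step $d_i d_j \le \tfrac12(d_i^2+d_j^2)$ combined with $\sum_j P_{ij}^2 = P_{ii} = \sigma_i$ are all correct and are precisely the mechanism behind strong self-concordance of $A^\top S^{-2}A$. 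Your observation that an operator-norm route would lose a $\sqrt d$ and would be useless here is also on point. One small stylistic remark: you invoke the leverage-weighted bound $\sum_i d_i^2\sigma_i$ but then immediately drop to $\sigma_i \le 1$, so you are really only using the projection inequality $\|PDP\|_F \le \|D\|_F$; the leverage structure beyond that is not actually exploited.

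There is, however, a quantitative gap that you flagged yourself and that deserves to be spelled out. Writing $u_j := a_j^\top(y-x)/(b_j - a_j^\top x)$ so that $r^2 = \sum_j u_j^2$, one gets $V\bigl(H(y)-H(x)\bigr)V = B^\top D B$ with $d_j = (1-u_j)^{-2}-1 = u_j(2-u_j)/(1-u_j)^2$, and the Frobenius bound then gives $\|B^\top D B\|_F \le \|D\|_F \le r(2-r)/(1-r)^2$. The path-integral route you sketch reproduces exactly the same constant: $\|V H'(\theta_s)[y-x]V\|_F \le 2r/(1-sr)^3$, and $\int_0^1 2r\,(1-sr)^{-3}\,ds = (1-r)^{-2}-1 = r(2-r)/(1-r)^2$. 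This is not just slack in the method: the stated bound $r/(1-r)^2$ is actually false. Take $K = [-M,1]^2$ with $M$ large, $x=(0,0)$, $y=(t,0)$ with $0<t<1$; then $H$ is diagonal, $r \to t$, and $\|H(x)^{-1/2}(H(y)-H(x))H(x)^{-1/2}\|_F \to t(2-t)/(1-t)^2 > t/(1-t)^2$ for every $t\in(0,1)$. So the constant in the lemma as transcribed is off by the factor $(2-r)$ (Laddha--Vempala's Lemma~1.5 carries a $2$), and the bound $r(2-r)/(1-r)^2$ that your integration produces is the correct one. This discrepancy is harmless downstream --- Lemma~\ref{proposition_slack} only needs $\|\hat S(\theta)^{-2}\hat S(z)^2 - I\|_F = O(1)$ for $\alpha = \Theta(1/d)$, and the extra factor of $2$ is absorbed --- but you should not try to force your argument to reach $r/(1-r)^2$ exactly, because no correct argument will.
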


\noindent
The following lemma gives a high-probability {\em lower} bound on the determinantal term $\frac{\mathrm{det}\left(\Phi (z)\right)}{\mathrm{det}( \Phi(\theta))}$ in the Metropolis acceptance ratio.

\begin{lemma}[Lemma 6.9 in \cite{mangoubi2022faster}] \label{lemma_det}
Consider any $\theta \in \mathrm{Int}(K)$, and $\xi \sim N(0,I_d)$.  Let $z= \theta + (\Phi(\theta))^{-\frac{1}{2}} \xi$.
Then
\begin{equation}\label{eq_f12}
    \mathbb{P}\left (\frac{\mathrm{det}\left(\Phi (z)\right)}{\mathrm{det}( \Phi(\theta))} \geq \frac{48}{50} \right) \geq 1-\frac{98}{100},
\end{equation}
and

\begin{equation}\label{eq_f13}
\mathbb{P}\left( \|z - \theta\|_{\Phi(z)}^2 - \|z - \theta\|_{\Phi(\theta)}^2
\leq \frac{2}{50} \right )\geq \frac{98}{100}.
\end{equation}

\end{lemma}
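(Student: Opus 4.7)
The key structural identity is that, since $z - \theta = \Phi^{-1/2}(\theta)\xi$ with $\xi \sim N(0,I_d)$, we have $\|z-\theta\|_{\Phi(\theta)}^2 = \xi^\top \xi$, a $\chi^2_d$ random variable. Standard chi-squared tail bounds guarantee that $\|\xi\|^2 \leq Cd$ with probability at least $99/100$ for an absolute constant $C$. Because $\Phi(\theta) = \alpha^{-1}H(\theta) + \eta^{-1} I_d \succeq \alpha^{-1} H(\theta)$, this translates into $\|z-\theta\|_{H(\theta)}^2 \leq \alpha\|\xi\|^2 \leq \alpha C d$, which is a small constant whenever $\alpha \leq c/d$---the regime the algorithm operates in. Hence $\|z-\theta\|_{H(\theta)}$ sits comfortably in the domain where strong self-concordance applies.

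The second ingredient is a Frobenius-norm change bound of the form
\[
    \|\Phi^{-1/2}(\theta)(\Phi(z)-\Phi(\theta))\Phi^{-1/2}(\theta)\|_F \;\leq\; \frac{c_1\|z-\theta\|_{\Phi(\theta)}}{(1-c_1\|z-\theta\|_{\Phi(\theta)})^2}
\]
for a suitable scale $c_1$ depending on $\alpha$. If $\Phi$ were itself a log-barrier Hessian this would follow directly from Lemma~\ref{Lemma_frobenius_strong_self_concordance}; since it is not, my plan (as suggested by the overview) is to write $\Phi$ as the limit of log-barrier Hessians $H_j$ of augmented polytopes whose many redundant constraints produce the soft-threshold regularizer $\eta^{-1} I_d$ in the limit, apply Lemma~\ref{Lemma_frobenius_strong_self_concordance} to each $H_j$, and pass to the limit. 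Writing $M := \Phi^{-1/2}(\theta)(\Phi(z)-\Phi(\theta))\Phi^{-1/2}(\theta)$, one then controls both $\|M\|_F$ and $\|M\|_{\mathrm{op}}$ (since $\|\cdot\|_{\mathrm{op}} \leq \|\cdot\|_F$) whenever $\|z-\theta\|_{\Phi(\theta)}$ is moderate.

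With $M$ in hand, the two inequalities follow in parallel. For \eqref{eq_f12}, write
\[
    \log\frac{\mathrm{det}\,\Phi(z)}{\mathrm{det}\,\Phi(\theta)} \;=\; \mathrm{tr}\,\log(I+M),
\]
and expand $\log(I+M) = M - \tfrac12 M^2 + \dotsb$. The linear term $\mathrm{tr}(M)$ is, up to scale, a Gaussian linear functional of $\xi$ whose coefficient vector has squared norm bounded by the analog of the Vaidya identity $(\nabla V)^\top \Phi^{-1}\nabla V = O(d)$; Gaussian concentration controls it with constant probability. The quadratic remainder is bounded in absolute value by $\tfrac12\|M\|_F^2$, which is small by the previous step. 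Exponentiating yields $\mathrm{det}\,\Phi(z)/\mathrm{det}\,\Phi(\theta) \geq 48/50$ with probability $\geq 98/100$. For \eqref{eq_f13}, observe
\[
    \|z-\theta\|^2_{\Phi(z)} - \|z-\theta\|^2_{\Phi(\theta)} \;=\; \xi^\top M \xi \;\leq\; \|M\|_{\mathrm{op}}\,\|\xi\|^2,
\]
and combine the chi-squared bound on $\|\xi\|^2$ with the operator-norm bound on $M$.

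The main technical obstacle is the Frobenius-norm strong self-concordance for $\Phi$: Lemma~\ref{Lemma_frobenius_strong_self_concordance} supplies it only for unregularized log-barrier Hessians, and a priori the soft-threshold regularizer could interact badly with the Frobenius norm since $\eta^{-1} I_d$ affects $d$ eigendirections simultaneously. The limiting argument---approximating $\eta^{-1} I_d$ by a large family of redundant constraints so that each approximant is a genuine log-barrier Hessian covered by the lemma---circumvents this, but requires verifying that the inequality is preserved under the limit, in particular that no blow-up occurs in the Frobenius norm as the number of approximating constraints tends to infinity. Once that is in place, the rest is standard Gaussian concentration bookkeeping and second-order Taylor analysis of $\log\mathrm{det}$.
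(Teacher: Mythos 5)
The statement you set out to prove is not proved in this paper: it is imported verbatim from \cite{mangoubi2022faster} (their Lemma~6.9) and used as a black box in the preliminaries, so there is no ``paper's own proof'' to compare against. Judged on its own terms, your outline has the right scaffolding for the Frobenius-norm self-concordance of $\Phi$ (the limiting-family-of-log-barriers argument is exactly what the paper \emph{does} prove, in its Lemma~\ref{Lemma_Frobenius_2}), but the final step for \eqref{eq_f13} does not close.

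The gap is in the inequality $\xi^\top M\xi \leq \|M\|_{\mathrm{op}}\,\|\xi\|^2$. Writing $r := \|z-\theta\|_{\Psi(\theta)} = \sqrt{\alpha}\|\xi\|$, the Frobenius-norm strong self-concordance gives $\|M\|_{\mathrm{op}} \leq \|M\|_F \leq r/(1-r)^2$, which with $\alpha = \Theta(1/(d\log(1/\gamma)))$ and $\|\xi\|^2 = \Theta(d)$ makes $\|M\|_F$ a small \emph{constant}, not $O(1/d)$. Hence $\|M\|_{\mathrm{op}}\,\|\xi\|^2 = \Theta(\sqrt{\alpha}\,\|\xi\|^3) = \Theta(d/\sqrt{\log(1/\gamma)})$, which is $\Theta(d)$ and nowhere near the required $2/50$. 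The point is that $\xi^\top M \xi$ is, to leading order, the cubic form $h^\top \mathrm{D}\Phi(\theta)[h]\,h = 2\alpha^{-1}\sum_i (a_i^\top h)^3/(b_i - a_i^\top\theta)^3$ with $h = \Phi(\theta)^{-1/2}\xi$, an odd polynomial in the Gaussian direction. Its mean is (nearly) zero and its standard deviation is $O(\sqrt{\alpha d})$, i.e.\ a small constant---but this requires exploiting the sign cancellation among the cubic terms $u_i^3$, which a worst-case operator-norm bound provably cannot see. You would need a moment computation (or a Hanson--Wright-type argument applied to the \emph{linearization} of $M$, which is a fixed matrix) rather than $\|M\|_{\mathrm{op}}\|\xi\|^2$.

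A related, milder issue affects your argument for \eqref{eq_f12}: you treat $\mathrm{tr}(M)$ as ``a Gaussian linear functional of $\xi$,'' but $M$ depends on $\xi$ through $z$, so $\mathrm{tr}(M) = \mathrm{tr}(\Phi(\theta)^{-1}(\Phi(z)-\Phi(\theta)))$ is nonlinear. One must linearize, $\mathrm{tr}(M) = \langle \nabla\log\det\Phi(\theta),\, z-\theta\rangle + O(\|M\|_F^2)$, before invoking the Vaidya-type gradient bound and Gaussian concentration; alternatively, convexity of $\theta \mapsto \log\det\Phi(\theta)$ gives the needed one-sided lower bound $\log\det\Phi(z) - \log\det\Phi(\theta) \geq \langle \nabla\log\det\Phi(\theta), z-\theta\rangle$ without any Taylor remainder, which is cleaner and is what the paper's own overview alludes to. Finally, note that the Vaidya identity you cite is stated for $V = \log\det H$ with the $H^{-1}$ metric; you would still need to verify its analogue for the regularized $\Phi$ with the $\Phi^{-1}$ metric, which is where the $\alpha$-scaling that makes the variance $O(1)$ actually enters.
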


\noindent
The following lemma of \cite{mangoubi2022faster} says that the Hessian $\Phi$ of the regularized barrier function, while not the Hessian of a log-barrier function for any system of inequalities defining $K$, is nevertheless the limit of an infinite sequence of matrices $H_j$, where each $H_j$ the Hessian of a logarithmic barrier function for a system of inequalities defining the same polytope $K$.

\begin{lemma}[Lemma 6.7 in \cite{mangoubi2022faster}] \label{lemma_limits}
There exits a sequence of matrix-valued functions $\{H_j\}_{j=1}^{\infty}$, where each $H_j$
 is the Hessian of a log-barrier function on $K$, such that for all $w \in \mathrm{Int}(K)$,
\begin{equation*} \label{eq_L1}
    \lim_{j\rightarrow \infty} H_j(w) = \alpha \Phi(w), \qquad \textrm{and } \qquad  \lim_{j\rightarrow \infty} (H_j(w))^{-1} = \alpha^{-1} (\Phi(w))^{-1},
\end{equation*}
uniformly in $w$.
\end{lemma}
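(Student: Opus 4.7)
The plan is to construct each $H_j$ as the Hessian of a log-barrier function on the same polytope $K$, by keeping the $m$ original defining inequalities and appending a growing family of redundant inequalities whose log-barrier contribution tends to the constant PSD matrix $\alpha\Phi(w)-H(w) = \alpha\eta^{-1}I_d$. Since $K\subseteq B(0,R)$, the $2d$ constraints $\pm e_i^\top w\le M_j$ are redundant for $M_j>R$; including $N_j$ identical copies of each in the defining system gives the log-barrier Hessian
$$H_j(w) \;=\; H(w) \;+\; N_j\sum_{i=1}^d \Bigl[\frac{e_i e_i^\top}{(M_j - w_i)^2} + \frac{e_i e_i^\top}{(M_j + w_i)^2}\Bigr].$$
I would choose $N_j,M_j\to\infty$ with $N_j/M_j^2\to\alpha/(2\eta)$ (e.g.\ $N_j=j$ and $M_j=\sqrt{2\eta j/\alpha}$, which exceeds $R$ for all sufficiently large $j$).

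For the first limit, I use that $|w_i|\le R$ for every $w\in K$ together with $M_j^2/(M_j\pm w_i)^2 = (1\pm w_i/M_j)^{-2} = 1+O(R/M_j)$ to see that each bracketed term of the correction equals $(2/M_j^2)(1+O(R/M_j))$, so the $N_j$-prefactor makes the whole correction equal $\alpha\eta^{-1}I_d\cdot(1+O(R/M_j))$ in operator norm, uniformly in $w\in\mathrm{Int}(K)$ because the error rate depends only on $R/M_j$.

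For the second limit I apply the resolvent identity
$$H_j^{-1}(w) - \alpha^{-1}\Phi^{-1}(w) \;=\; H_j^{-1}(w)\bigl(\alpha\Phi(w) - H_j(w)\bigr)\bigl(\alpha\Phi(w)\bigr)^{-1}.$$
Because $\alpha\Phi(w)\succeq \alpha\eta^{-1}I_d$, and the correction term of $H_j(w)$ is PSD and at least $\tfrac12\alpha\eta^{-1}I_d$ for large $j$ (by the first limit), the operator norms of $(\alpha\Phi(w))^{-1}$ and $H_j^{-1}(w)$ are uniformly bounded on $\mathrm{Int}(K)$; combining this with the uniform convergence $\alpha\Phi - H_j\to 0$ from the previous paragraph yields the uniform convergence of the inverses to $\alpha^{-1}\Phi^{-1}$.

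The main obstacle is conceptual rather than computational: one must recognize that $\alpha\Phi - H$ is \emph{constant} in $w$, so the missing spectral mass can be approximated by a single family of redundant constraints placed far from $K$, and one must verify that the enlarged inequality system still produces a bona fide log-barrier function. The latter follows immediately from the definition $\varphi(w)=-\sum_k\log(b_k - a_k^\top w)$ allowing repeated summands, and the rest of the argument reduces to an elementary Taylor expansion in $1/M_j$ together with the resolvent identity above.
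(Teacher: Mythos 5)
Your proposal is correct, and it matches the approach the paper attributes to \cite{mangoubi2022faster}: since $\alpha\Phi(w)-H(w)=\alpha\eta^{-1}I_d$ is constant, one appends to the defining system of $K$ many identical copies of far-away redundant box constraints $\pm w_i\le M_j$ whose log-barrier Hessian contribution tends uniformly to that constant, and then passes to inverses via the resolvent identity using the uniform lower bound $H_j(w)\succeq \tfrac12\alpha\eta^{-1}I_d$ for large $j$. (The paper itself does not reprove this lemma but cites it; the overview's phrase ``increasing set of (redundant) inequalities defining the polytope $K$'' is exactly your construction, so there is nothing to add.)
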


\noindent
The following Lemma (Lemma \ref{Lemma_TV_output}) gives a mixing time bound for a basic implementation of the soft-threshold Dikin walk (Algorithm \ref{alg_2}, below):

\LinesNumbered
\begin{algorithm}[H]\caption{Basic implementation of soft-threshold Dikin walk}\label{alg_2}
\KwIn{$m,d \in \mathbb{N}$, $A \in \mathbb{R}^{m \times d}$, $b \in \mathbb{R}^m$, which define the polytope $K := \{\theta \in \mathbb{R}^d: A \theta \leq b\}$}
\KwIn{$\mathcal{Y}_0 \in \mathrm{Int}(K)$, $T \in \mathbb{N}$}
{\bf Hyperparameters:} $\alpha>0$; $\eta>0$;  $T\in \mathbb{N}$;

\For{$i = 0, \ldots, T-1$}{
Sample $\zeta_i \sim N(0, I_d)$

Set $H(\mathcal{Y}_i)= \sum_{j=1}^m \frac{a_j a_j^\top}{(b_j - a_j^\top \mathcal{Y}_i)^2}$

Set  $\Phi(\mathcal{Y}_i) = \alpha^{-1} H(\mathcal{Y}_i) +  \eta^{-1}I_d$

 Set $\mathcal{Z}_{i} = \Phi(\mathcal{Y}_i)^{-\frac{1}{2}} \zeta_i$

 \uIf{$\mathcal{Z}_i \in \mathrm{Int}(K)$}{

Set $H(\mathcal{Z}_i)= \sum_{j=1}^m \frac{a_j a_j^\top}{(b_j - a_j^\top \mathcal{Z}_i)^2}$

Set  $\Phi(\mathcal{Z}_i) = \alpha^{-1} H(\mathcal{Y}_i) +  \eta^{-1}I_d$

 Set $p_i = \frac{1}{2} \frac{\frac{\mathrm{det}(\Phi(\mathcal{Z}_{i}))}{\mathrm{det}(\Phi(\mathcal{Y}_i))}}{\frac{\mathrm{det}(\Phi(\mathcal{Y}_i))}{\mathrm{det}(\Phi(\mathcal{Z}_{i}))} + \frac{\mathrm{det}(\Phi(\mathcal{Z}_{i}))}{\mathrm{det}(\Phi(\mathcal{Y}_i))}} \times \min \left (\frac{e^{-f(\mathcal{Z}_{i})}}{e^{-f(\mathcal{Y}_i)}} \times e^{\|\mathcal{Z}_{i}- \mathcal{Y}_i\|_{\Phi(\mathcal{Y}_i)}^2 - \|\mathcal{Y}_i- \mathcal{Z}_{i}\|_{\Phi(\mathcal{Z}_{i})}^2}, \, \, 1 \right)$
        
    Set $\mathcal{Y}_{i+1} = \mathcal{Z}_{i} $ with probability $p_i$,} 
 \Else{Reject $\mathcal{Z}_{i}$.}
 {\bf Output:} $\mathcal{Y}_T$
}

\end{algorithm}

\begin{lemma}[Lemma 6.15 of \cite{mangoubi2022faster}]\label{Lemma_TV_output}
Let $\mathcal{Y}_0, \mathcal{Y}_1,\cdots, \mathcal{Y}_T$ be the Markov chain generated by Algorithm \ref{alg_2}.  Let $\delta >0$.  Suppose that $f: K \rightarrow \mathbb{R}$ is either $L$-Lipschitz (or has $\beta$-Lipschitz gradient). Suppose that $\mathcal{Y}_0 \sim \nu_0$ where $\nu_0$ is a $w$-warm  distribution with respect to $\pi \propto e^{-f}$ with support on $K$. 
For any $t>0$ denote by $\nu_t$ denote the distribution of $\mathcal{Y}_t$,  for $\alpha \leq \frac{1}{10^5 d}$ and $\eta \leq \frac{1}{10^4 d  L^2}$ (or $\eta \leq \frac{1}{10^4 d  \beta}$).
Then for any $T\geq 10^9 \left( 2m \alpha^{-1} + \eta^{-1} R^{2} \right) \times \log(\frac{w}{\delta})$ we have that
\begin{equation*}
 \| \nu_T - \pi \|_{\mathrm{TV}} \leq \delta.  
\end{equation*}

\end{lemma}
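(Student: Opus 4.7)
The plan is to prove Lemma~\ref{Lemma_TV_output} via the standard conductance argument for reversible Markov chains, specialized to the soft-threshold regularized metric $\Phi(\theta) = \alpha^{-1}H(\theta) + \eta^{-1}I_d$ and the smoothed Metropolis acceptance rule of Algorithm~\ref{alg_2}.

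First I would check that $\pi \propto e^{-f}$ is stationary for the chain. The symmetric ``sigmoid'' prefactor in $p_i$, combined with the $\min$-factor and the Gaussian proposal density $p(\theta\to z) \propto \sqrt{\det\Phi(\theta)}\,e^{-\|\theta-z\|_{\Phi(\theta)}^2}$, satisfies detailed balance $p(\theta\to z)\pi(\theta)p_i(\theta,z) = p(z\to\theta)\pi(z)p_i(z,\theta)$ by direct calculation. Next, I would show each proposed move is accepted with probability $\Omega(1)$ on a set of Gaussian samples of probability $\Omega(1)$. Lemma~\ref{lemma_det} directly controls two of the three factors in $p_i$: it lower bounds $\det\Phi(\mathcal{Z})/\det\Phi(\mathcal{Y})$ by $48/50$ (which forces the sigmoid prefactor to be at least a universal constant) and the quadratic-form exponent difference by $-2/50$. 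The remaining factor $e^{f(\mathcal{Y})-f(\mathcal{Z})}$ is bounded from below by a constant because the Gaussian step satisfies $\|\mathcal{Z}-\mathcal{Y}\|_2 \leq O(\sqrt{\eta d})$ w.h.p., so $|f(\mathcal{Z})-f(\mathcal{Y})| \leq L \cdot O(\sqrt{\eta d}) = O(1)$ when $\eta = \Theta(1/(dL^2))$, and similarly for the smooth case via the quadratic remainder.

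The heart of the argument is a conductance lower bound for the chain. Here I would combine two ingredients: an isoperimetric inequality for the log-concave distribution $\pi$ on $K$ expressed in terms of the Hilbert (cross-ratio) distance $\sigma$, and a one-step overlap inequality stating that for $u,v \in \mathrm{Int}(K)$ with $\|u-v\|_{\Phi(u)} \leq \epsilon_0$ sufficiently small, $\|P(u,\cdot) - P(v,\cdot)\|_{\mathrm{TV}} \leq 1-\Omega(1)$. The overlap bound follows by combining the Frobenius-norm self-concordance estimate from the overview (namely $\|\Psi(u)^{-1/2}(\Psi(v)-\Psi(u))\Psi(u)^{-1/2}\|_F \leq O(\|u-v\|_{\Phi(u)})$, obtained by transferring Lemma~\ref{Lemma_frobenius_strong_self_concordance} to $\Phi$ via the limiting sequence of Lemma~\ref{lemma_limits}), which via Pinsker controls the TV distance between the Gaussian proposals $N(u,\Phi(u)^{-1})$ and $N(v,\Phi(v)^{-1})$, together with the constant acceptance bound from the previous step. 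The Hilbert distance satisfies $\sigma^2(u,v) \geq \frac{1}{2m\alpha^{-1}+2\eta^{-1}R^2}\|u-v\|_{\Phi(u)}^2$ as noted in the overview. Standard Lov\'asz--Simonovits arguments then yield conductance $\phi = \Omega(1/\sqrt{m\alpha^{-1}+\eta^{-1}R^2})$.

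Finally, applying the conductance-to-mixing bound for $w$-warm starts, $\|\nu_T - \pi\|_{\mathrm{TV}} \leq \sqrt{w}(1-\phi^2/2)^T$, and solving for $T$ gives the claimed threshold $T = O((m\alpha^{-1}+\eta^{-1}R^2)\log(w/\delta))$, which with the specified values of $\alpha$ and $\eta$ becomes $\tilde{O}((md + L^2R^2)\log(w/\delta))$ (or $\tilde{O}((md + \beta R^2)\log(w/\delta))$ in the smooth case). I expect the main technical obstacle to be the one-step overlap bound, since it requires simultaneously controlling (i) the TV distance between two Gaussians with \emph{different} covariances $\Phi(u)^{-1}$ and $\Phi(v)^{-1}$, (ii) the smoothed Metropolis acceptance ratio via the determinantal and quadratic-form estimates, and (iii) the density contribution $e^{-f}$ under Lipschitz/smooth $f$. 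The strong Frobenius self-concordance for $\Phi$ and the determinantal bounds of Lemma~\ref{lemma_det} are precisely what is needed to push each of these through.
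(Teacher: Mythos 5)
The paper does not actually prove this lemma itself: it is imported verbatim as Lemma~6.15 of \cite{mangoubi2022faster}, and the paper only adds the remark (immediately after the statement) that replacing the original $\min$-form acceptance rule with the smoothed rule $p_i$ requires ``minor modifications'' to the cited proof. So there is nothing in this paper to compare your argument against line by line. That said, your sketch is a correct reconstruction of the conductance-based argument one expects to find in \cite{mangoubi2022faster}: detailed balance for the Metropolis chain, $\Omega(1)$ acceptance probability using Lemma~\ref{lemma_det} together with the choice $\eta = \Theta(1/(dL^2))$ (or $\Theta(1/(d\beta))$), a one-step overlap bound, the regularized Hilbert-distance isoperimetric inequality $\sigma^2(u,v) \geq \tfrac{1}{2m\alpha^{-1}+2\eta^{-1}R^2}\|u-v\|^2_{\Phi(u)}$, and the Lov\'asz--Simonovits conductance-to-mixing bound from a $w$-warm start. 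This matches the outline given in the paper's Section~\ref{sec:overview}.

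One caution worth flagging in your sketch: you propose to establish the one-step overlap bound by transferring the strong (Frobenius) self-concordance of the log-barrier to $\Phi$ via the limiting argument of Lemma~\ref{lemma_limits}. But that transfer is exactly Lemma~\ref{Lemma_Frobenius_2} of \emph{this} paper, which is a new contribution used for the inverse-maintenance and determinant-estimation parts of the algorithm, not a tool that Lemma~\ref{Lemma_TV_output} is allowed to depend on if it is to be cited from prior work. The proof in \cite{mangoubi2022faster} must be self-contained and presumably controls the proposal overlap via more direct determinant and quadratic-form bounds of the type packaged in Lemma~\ref{lemma_det}; invoking Lemma~\ref{Lemma_Frobenius_2} here would invert the logical dependency between the two papers. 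Separately, one must verify that the specific smoothed prefactor in $p_i$ of Algorithm~\ref{alg_2} actually satisfies detailed balance against the Gaussian proposal density; the sigmoid form $\bigl(1+(\det\Phi(z)/\det\Phi(\theta))^{-1/2}\bigr)^{-1}$ from the overview does, and the constant factor $\tfrac{1}{2}$ is harmless (it is a lazy chain, at worst doubling the mixing time), so this is a routine but necessary check.
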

Note that Lemma \ref{Lemma_TV_output} is given in  \cite{mangoubi2022faster} for the same Markov chain as in Algorithm \ref{alg_2}, but with  acceptance rule  $\frac{1}{2} \times 
     \min\left (\frac{e^{-f(\mathcal{Z}_{i})} \sqrt{\mathrm{det}(\Phi(\mathcal{Z}_{i}))}}{e^{-f(\mathcal{Y}_i)} \sqrt{\mathrm{det}(\Phi(\mathcal{Y}_i))}} \times e^{\|\mathcal{Z}_{i}- \mathcal{Y}_i\|_{\Phi(\mathcal{Y}_i)}^2 - \|\mathcal{Y}_i- \mathcal{Z}_{i}\|_{\Phi(\mathcal{Z}_{i})}^2}, \, \, 1 \right )$ in place of the acceptance rule $p_i$ used in  Algorithm \ref{alg_2}.
The same proof of Lemma \ref{Lemma_TV_output} given in  \cite{mangoubi2022faster} holds for the above acceptance rule in  Algorithm \ref{alg_2} with minor modifications.
The following trace inequality is needed in the proofs.

\begin{lemma}[von Neumann trace Inequality \cite{von1962some}]\label{Von_Neumann_trace}
Let $M, Z \in \mathbb{R}^{d\times d}$ be matrices, and denote by   $\sigma_1\geq \cdots \geq \sigma_d$ the singular values of $M$ and by $\gamma_1\geq \cdots \geq \gamma_d$ the singular values of $Z$.
Then
\begin{equation*}
    |\mathrm{tr}(MZ)| \leq \sum_{i=1}^d \sigma_i \gamma_i.
\end{equation*}
    
\end{lemma}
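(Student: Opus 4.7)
The statement is a classical inequality, so my plan is to present the standard proof based on the singular value decomposition together with the Birkhoff--von Neumann theorem and the rearrangement inequality. First I would write the SVDs $M = U_1 \Sigma V_1^{\top}$ and $Z = U_2 \Gamma V_2^{\top}$, with $\Sigma = \mathrm{diag}(\sigma_1,\dots,\sigma_d)$ and $\Gamma = \mathrm{diag}(\gamma_1,\dots,\gamma_d)$. Using cyclicity of the trace,
$$\mathrm{tr}(MZ) \;=\; \mathrm{tr}(P\,\Sigma\,Q\,\Gamma) \;=\; \sum_{i,j=1}^{d} P_{ij}\,Q_{ji}\,\sigma_j\,\gamma_i,$$
where $P := V_2^{\top} U_1$ and $Q := V_1^{\top} U_2$ are orthogonal. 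Applying the triangle inequality then reduces the claim to bounding $\sum_{i,j} c_{ij}\,\sigma_j\,\gamma_i$, where $c_{ij} := |P_{ij}|\,|Q_{ji}|$.

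The first key step is to show that the non-negative matrix $C := (c_{ij})$ is doubly sub-stochastic. For the row sums, Cauchy--Schwarz gives
$$\sum_{j=1}^{d} c_{ij} \;\leq\; \Bigl(\textstyle\sum_{j} P_{ij}^{2}\Bigr)^{1/2}\Bigl(\textstyle\sum_{j} Q_{ji}^{2}\Bigr)^{1/2} \;=\; 1,$$
since the $i$-th row of $P$ and the $i$-th column of $Q$ each have unit $\ell_2$ norm; a symmetric computation gives $\sum_i c_{ij} \leq 1$.

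Next I would invoke the sub-stochastic analogue of the Birkhoff--von Neumann theorem, which expresses $C$ as a convex combination of partial-permutation matrices (i.e., $\{0,1\}$-matrices with at most one nonzero entry per row and column). If one prefers to avoid the sub-stochastic extension, one can instead pad $C$ to a $2d\times 2d$ doubly stochastic matrix in the standard way and reduce to the classical Birkhoff statement. For any partial-permutation matrix associated with an injection $\pi$ defined on some $I \subseteq \{1,\dots,d\}$, the corresponding contribution equals $\sum_{i\in I}\sigma_{\pi(i)}\gamma_i$, and because the $\sigma_i$ and $\gamma_i$ are both arranged in decreasing order, the rearrangement inequality bounds this by $\sum_{i=1}^{d}\sigma_i\gamma_i$. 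Averaging against the convex-combination weights from Birkhoff--von Neumann yields the desired $|\mathrm{tr}(MZ)| \leq \sum_{i=1}^{d}\sigma_i\gamma_i$.

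The only place that requires any care is the sub-stochastic Birkhoff step, and even that has two standard circumventions (padding to a doubly stochastic matrix, or a direct extreme-point argument); all other steps are routine SVD, Cauchy--Schwarz, and rearrangement manipulations.
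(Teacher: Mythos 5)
The paper does not prove Lemma \ref{Von_Neumann_trace}; it is stated as a classical result and cited directly to \cite{von1962some}, so there is no in-paper proof to compare your argument against. That said, your proof is correct and is the standard one: writing $M=U_1\Sigma V_1^\top$, $Z=U_2\Gamma V_2^\top$, reducing via cyclicity to $\sum_{i,j}P_{ij}Q_{ji}\sigma_j\gamma_i$ with $P=V_2^\top U_1$, $Q=V_1^\top U_2$ orthogonal, showing $c_{ij}=|P_{ij}||Q_{ji}|$ is doubly sub-stochastic by Cauchy--Schwarz applied to the unit rows of $P$ and unit columns of $Q$, decomposing via the sub-stochastic Birkhoff--von Neumann theorem, and finishing with the rearrangement inequality (using $\sigma_i,\gamma_i\geq 0$ to extend a partial injection to a full permutation without decreasing the sum). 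You correctly flag the sub-stochastic Birkhoff step as the only non-routine ingredient and offer the standard padding workaround, so the argument is complete.
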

We need the following fact. 
\begin{proposition}\label{lemma_integral}
Let $D \in \mathbb{R}^{(d +m) \times (d+m)}$ be a diagonal matrix.  Then\\ $$\log \mathrm{det}(\hat{A}^\top D \hat{A}) - \log \mathrm{det}(\hat{A}^\top \hat{A}) =\int_0^1 \mathrm{tr}\left(\tau \hat{A}^\top D \hat{A} + (1-\tau) \hat{A}^\top \hat{A})^{-1} ( \hat{A}^\top D \hat{A}-   \hat{A}^\top \hat{A}\right) \mathrm{d} \tau.$$
\end{proposition}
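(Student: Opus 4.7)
\noindent
The plan is to obtain this identity as a direct application of the fundamental theorem of calculus to the function $g(\tau) := \log\det M(\tau)$, where $M(\tau) := \tau \hat{A}^\top D \hat{A} + (1-\tau)\hat{A}^\top \hat{A}$ for $\tau \in [0,1]$. Note that $g(1) - g(0)$ is exactly the left-hand side of the claimed identity, so it suffices to compute $g'(\tau)$ and integrate.

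First I would check that $g$ is well-defined and smooth on $[0,1]$. Since $\hat{A} = \begin{pmatrix} A \\ I_d \end{pmatrix}$ has full column rank, $\hat{A}^\top \hat{A}$ is positive definite; in the regime where the proposition is used, $D = \hat{S}(\theta)^2$ is positive definite as well, so each $\hat{A}^\top D \hat{A}$ is positive definite, and convex combinations of positive definite matrices are positive definite. Hence $M(\tau)$ is positive definite (and in particular invertible) for every $\tau \in [0,1]$, so $\log\det M(\tau)$ is a well-defined $C^1$ function of $\tau$.

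Next I would compute $g'(\tau)$ using Jacobi's formula: for a differentiable family of invertible matrices $M(\tau)$,
\begin{equation*}
\frac{\mathrm{d}}{\mathrm{d}\tau}\log\det M(\tau) = \mathrm{tr}\bigl(M(\tau)^{-1} M'(\tau)\bigr).
\end{equation*}
Since $M(\tau)$ is linear in $\tau$, we have $M'(\tau) = \hat{A}^\top D \hat{A} - \hat{A}^\top \hat{A}$, which is constant in $\tau$. Substituting gives
\begin{equation*}
g'(\tau) = \mathrm{tr}\Bigl(\bigl(\tau \hat{A}^\top D \hat{A} + (1-\tau)\hat{A}^\top \hat{A}\bigr)^{-1}\bigl(\hat{A}^\top D \hat{A} - \hat{A}^\top \hat{A}\bigr)\Bigr).
\end{equation*}

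Finally, applying the fundamental theorem of calculus yields $g(1) - g(0) = \int_0^1 g'(\tau)\,\mathrm{d}\tau$, which is exactly the claimed identity. There is no substantive obstacle here: the only thing to be careful about is justifying Jacobi's formula and the continuity of $M(\tau)^{-1}$ in $\tau$, both of which follow immediately from positive definiteness of $M(\tau)$ on the compact interval $[0,1]$.
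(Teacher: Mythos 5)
Your proof is essentially identical to the paper's: both define the linear interpolant $M(\tau)=\tau\hat{A}^\top D\hat{A}+(1-\tau)\hat{A}^\top\hat{A}$, apply Jacobi's formula $\frac{\mathrm{d}}{\mathrm{d}\tau}\log\det M(\tau)=\mathrm{tr}(M(\tau)^{-1}M'(\tau))$, and integrate via the fundamental theorem of calculus. Your additional remarks on positive definiteness of $M(\tau)$ (which the paper leaves implicit, and which is indeed needed since the proposition as stated only assumes $D$ diagonal) are a minor but welcome bit of extra care, not a different approach.
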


\begin{proof}
Let $\Theta(\tau) := \tau \hat{A}^\top D \hat{A} + (1-\tau) \hat{A}^\top \hat{A}$ for any $\tau \geq 0$.  Then 
\begin{align*}
  \textstyle \frac{\mathrm{d}}{\mathrm{d}\tau} \log \mathrm{det} \Theta(\tau) 
  &= \mathrm{tr}\left(  \frac{\mathrm{d}}{\mathrm{d}\tau} \log(\Theta(\tau)) \right) = \mathrm{tr}\left( \Theta^{-1}(\tau) \frac{\mathrm{d}}{\mathrm{d}\tau} \Theta(\tau) \right)\\ 
  &= \mathrm{tr}\left( (\tau \hat{A}^\top D \hat{A} + (1-\tau) \hat{A}^\top \hat{A})^{-1} ( \hat{A}^\top D \hat{A}-   \hat{A}^\top \hat{A}) \right).
\end{align*}
 Thus,
\begin{align*}
\textstyle &\log \mathrm{det}(\hat{A}^\top D \hat{A}) - \log \mathrm{det}(\hat{A}^\top \hat{A}) =  \int_0^1 \frac{\mathrm{d}}{\mathrm{d}\tau} \log \mathrm{det} \Theta(\tau) \mathrm{d} \tau\\ 
&  =\int_0^1 \mathrm{tr}\left( (\tau \hat{A}^\top D \hat{A} + (1-\tau) \hat{A}^\top \hat{A})^{-1} ( \hat{A}^\top D \hat{A}-   \hat{A}^\top \hat{A})\right) \mathrm{d} \tau.
\end{align*}
\end{proof}
Finally, we need the following concentration inequality to show a high probability bound on the error $Y-\mathbb{E}[Y]$ of the estimator $Y$ for $\log \mathrm{det} \Phi(z) - \log \mathrm{det} \Phi(\theta)$.

\begin{lemma}[Hanson-Wright concentration inequality]\label{lemma_HR}
Let $M \in \mathbb{R}^{d\times d}$.  Let $v \sim N(0,I_d)$.  Then for all $t>0$,
$\mathbb{P}(|v^{\top} M v | > t) \leq 2 \exp\left(-8 \min \left(\frac{t^2}{\|M\|_F^2}, \frac{t}{\|M\|_2}\right) \right).$
\end{lemma}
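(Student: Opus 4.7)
The plan is to use the standard Gaussian-chaos Chernoff argument. Since $v^\top M v = v^\top \tfrac{M+M^\top}{2} v$ and the symmetric part of $M$ has both Frobenius and spectral norms bounded by those of $M$, I assume without loss of generality that $M$ is symmetric. Diagonalising $M = U \Lambda U^\top$ with $\Lambda = \mathrm{diag}(\lambda_1, \ldots, \lambda_d)$ and setting $g := U^\top v$, rotational invariance of the standard Gaussian gives $g \sim N(0, I_d)$, and the problem reduces to a tail bound on $X := \sum_i \lambda_i g_i^2$. At this point I would flag a subtle point: for the stated inequality to be meaningful one must interpret it after centring, since with $M = I_d$ and $t$ small the bound fails for $v^\top M v$ but holds for $v^\top M v - \mathrm{tr}(M)$; this is how the lemma is applied in the paper, namely to the centred error $Y - \mathbb{E}[Y]$ of the log-determinant estimator.

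\textbf{MGF bound.} Next I would compute the moment generating function of $X - \mathrm{tr}(M) = \sum_i \lambda_i (g_i^2 - 1)$. Independence of the $g_i$ and the $\chi^2$ MGF give, for all $s$ with $2|s|\cdot\|M\|_2 < 1$,
\[ \mathbb{E} \exp\bigl(s (X - \mathrm{tr}(M))\bigr) = \prod_{i=1}^d \exp(-s\lambda_i)(1 - 2s\lambda_i)^{-1/2}. \]
The elementary inequality $-\log(1-u) - u \leq u^2$ for $|u| \leq 1/2$, applied coordinate-wise with $u = 2s\lambda_i$, yields
\[ \log \mathbb{E} \exp\bigl(s (X - \mathrm{tr}(M))\bigr) = \tfrac{1}{2} \sum_{i=1}^d \bigl(-\log(1 - 2s\lambda_i) - 2s\lambda_i\bigr) \leq 2 s^2 \sum_{i=1}^d \lambda_i^2 = 2 s^2 \|M\|_F^2, \]
valid for $|s| \leq 1/(4\|M\|_2)$.

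\textbf{Chernoff and optimisation.} Finally I would apply the Chernoff bound $\mathbb{P}(X - \mathrm{tr}(M) > t) \leq \exp(-st + 2s^2\|M\|_F^2)$ and optimise over $s \in (0, 1/(4\|M\|_2)]$ by taking $s^\star = \min\bigl(t/(4\|M\|_F^2),\ 1/(4\|M\|_2)\bigr)$. In the regime $t \leq \|M\|_F^2/\|M\|_2$ the unconstrained minimiser is admissible and yields a Gaussian-type tail of order $\exp\bigl(-c\, t^2/\|M\|_F^2\bigr)$, while in the opposite regime the boundary choice yields a sub-exponential tail of order $\exp\bigl(-c\, t/\|M\|_2\bigr)$. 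Running the same argument for $-(X - \mathrm{tr}(M))$ and taking a union bound produce the two-sided form and the factor of $2$. The main ``obstacle'' is purely constant-tracking: squeezing the exponent down to the $-8\min(\cdot,\cdot)$ stated in the lemma requires tightening the Taylor step (for example, replacing $-\log(1-u) - u \leq u^2$ by $\leq u^2/(2(1-|u|))$ for $|u| \leq 1/4$) and a careful choice of the admissible range of $s$, but no new ideas beyond the standard Hanson--Wright proof are required.
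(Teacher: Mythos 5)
The paper itself does not prove Lemma \ref{lemma_HR}; it is cited as the (known) Hanson--Wright inequality, so there is no in-paper proof to compare against. Your argument is the standard Gaussian-chaos MGF/Chernoff proof and it is correct. You are also right to flag the centering issue: the inequality as literally stated fails for $M = I_d$ and small $t$, and the intended (and used) statement is the tail bound for $v^\top M v - \mathbb{E}[v^\top M v] = v^\top M v - \mathrm{tr}(M)$, which is exactly the form applied to $Y - \mathbb{E}[Y]$ in the proof of Lemma \ref{lemma_HansonWright_bound}.

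One correction to your closing remark: the ``$-8$'' in the stated exponent is a typo for ``$-\tfrac{1}{8}$'', so the ``obstacle'' you describe does not exist and should not be pursued. Your own MGF bound $\log \mathbb{E}\exp\bigl(s(X - \mathrm{tr}\, M)\bigr) \leq 2 s^2 \|M\|_F^2$ for $|s| \leq 1/(4\|M\|_2)$, together with the choice $s^\star = \min\bigl(t/(4\|M\|_F^2),\ 1/(4\|M\|_2)\bigr)$, already yields precisely
\[
\mathbb{P}\bigl(|v^\top M v - \mathrm{tr}\, M| > t\bigr) \leq 2 \exp\left(-\tfrac{1}{8} \min\left(\tfrac{t^2}{\|M\|_F^2},\ \tfrac{t}{\|M\|_2}\right)\right),
\]
with no further tightening of the Taylor step required. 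Conversely, the constant $8$ is unattainable: with $M = I_d$ and $t = \sqrt{d}$ the claimed bound would give $\mathbb{P}(|\|v\|_2^2 - d| > \sqrt{d}) \leq 2 e^{-8}$, contradicting the central limit theorem since $\sqrt{d}$ is under one standard deviation of $\|v\|_2^2$. The $\tfrac{1}{8}$ form is also exactly what the paper substitutes when it invokes this lemma inside the proof of Lemma \ref{lemma_HansonWright_bound}, confirming the intended reading.
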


\section{Proof of the main result}

To prove Theorem \ref{thm_soft_threshold_Dikin_sparse}, we first show that the Frobenius norm of the Hessian of the log-barrier function with soft-threshold regularizer changes slowly with respect to the local norm (Lemmas \ref{Lemma_Frobenius_2} and  \ref{proposition_slack}).
We also prove a concentration inequality for  the random estimate $Y$ for the log-determinant (Lemma \ref{lemma_HansonWright_bound}).

We then give a proof of Theorem \ref{thm_soft_threshold_Dikin_sparse}.
The first part of the proof of Theorem \ref{thm_soft_threshold_Dikin_sparse} bounds the running time, by using Lemma \ref{proposition_slack} together with Lemma \ref{lemma_maintenance} in the preliminaries to show that one can deploy the fast linear solver to obtain a bound of $\mathrm{nnz}(A) + d^2$ on the per-iteration complexity of Algorithm \ref{alg_Soft_Dikin_Walk_sparse}.

The second part of the proof bounds the total variation distance between the output of Algorithm \ref{alg_Soft_Dikin_Walk_sparse} and the target distribution $\pi$.
Towards this, we use the concentration bound in Lemma \ref{lemma_HansonWright_bound} to show that the series expansions in Lines \ref{line_a8} and \ref{line_a9} of Algorithm \ref{alg_Soft_Dikin_Walk_sparse} converge w.h.p.\ and give a good estimate for the Metropolis acceptance probability.
To conclude, use this fact to show that there exists a coupling between the Markov chain computed by Algorithm \ref{alg_Soft_Dikin_Walk_sparse} and an ``exact'' implementation of the soft-barrier Dikin walk Markov such that these two coupled Markov chains are equal w.h.p. This allows us to use the bound on the total variation for the ``exact'' Markov chain (Lemma \ref{Lemma_TV_output} in the preliminaries) to obtain a bound on the total variation error of the output of Algorithm \ref{alg_Soft_Dikin_Walk_sparse}.

\begin{lemma}\label{Lemma_Frobenius_2}
Let $\varphi(\theta):= -  \sum_{j=1}^{m} \log(b_j-a_j^\top \theta )$ be a log-barrier function for $K:=\{\theta \in \mathbb{R}^d: A\theta \leq b\}$,
where $A\in   \mathbb{R}^{m\times d}$ and $b \in \mathbb{R}^m$.
  Let $c\geq 0$, and define the matrix-valued function $\Psi(\theta) := \nabla^2 \varphi(\theta) + c I_d$.
    Then for any $\theta,z \in \mathrm{Int}(K)$ with $\|\theta-z\|_{\Psi(\theta)} <1$, we have
    $    \|\Psi^{-\frac{1}{2}}(\theta)(\Psi(z)-\Psi(\theta))\Psi^{-\frac{1}{2}}(\theta)\|_F \leq \frac{\|\theta-z\|_{\Psi(\theta)}}{(1-\|\theta-z\|_{\Psi(\theta)} )^2}.$
\end{lemma}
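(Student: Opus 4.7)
\textbf{Proof plan for Lemma \ref{Lemma_Frobenius_2}.} The obstacle is that $\Psi(\theta) = \nabla^2\varphi(\theta) + cI_d$ is \emph{not} itself the Hessian of a log-barrier function for any system of linear inequalities defining $K$ (once $c>0$), so Lemma \ref{Lemma_frobenius_strong_self_concordance} cannot be applied to $\Psi$ directly. The plan is to approximate $\Psi$ by a sequence of matrix-valued functions that \emph{are} log-barrier Hessians, apply the Frobenius self-concordance bound to each member of the sequence, and then pass to the limit. Crucially, since both $\Psi^{-1/2}$ and $\Psi$ appear on the left-hand side (in different roles), one cannot use naive PSD-monotonicity arguments to reduce to Lemma \ref{Lemma_frobenius_strong_self_concordance} for the unregularized $H(\theta)=\nabla^2\varphi(\theta)$; the approximation route is essentially forced.

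First, I invoke Lemma \ref{lemma_limits} to obtain a sequence $\{H_j\}_{j=1}^\infty$ of matrix-valued functions on $\mathrm{Int}(K)$, each the Hessian of a log-barrier function for a (possibly redundant) system of linear inequalities defining $K$, such that $H_j(w) \to \Psi(w)$ and $H_j^{-1}(w) \to \Psi^{-1}(w)$ uniformly in $w$ on compact subsets of $\mathrm{Int}(K)$. Lemma \ref{lemma_limits} supplies exactly this when the regularization constant equals $\alpha\eta^{-1}$; the same construction (adding many redundant box-type inequalities of the form $\pm e_k^\top \theta \leq M$ whose Hessian contributions sum, in a scaling limit $M\to\infty$ with appropriately many copies, to $cI_d$) works verbatim for any $c\geq 0$ after a cosmetic reparametrization.

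Next, fix $\theta, z \in \mathrm{Int}(K)$ with $\|\theta-z\|_{\Psi(\theta)}<1$. By the uniform convergence from the previous step, $\|\theta-z\|_{H_j(\theta)}^2 = (\theta-z)^\top H_j(\theta)(\theta-z) \longrightarrow \|\theta-z\|_{\Psi(\theta)}^2$, so for all sufficiently large $j$ we also have $\|\theta-z\|_{H_j(\theta)}<1$, and Lemma \ref{Lemma_frobenius_strong_self_concordance} applies to each such $H_j$, yielding
$$\|H_j^{-\frac{1}{2}}(\theta)(H_j(z)-H_j(\theta))H_j^{-\frac{1}{2}}(\theta)\|_F \;\leq\; \frac{\|\theta-z\|_{H_j(\theta)}}{(1-\|\theta-z\|_{H_j(\theta)})^2}.$$

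Finally, I pass to the limit $j\to\infty$. The right-hand side converges to $\|\theta-z\|_{\Psi(\theta)}/(1-\|\theta-z\|_{\Psi(\theta)})^2$ by continuity of $t\mapsto t/(1-t)^2$ on $[0,1)$. For the left-hand side, continuity of matrix inversion, the (principal) matrix square root, multiplication, and the Frobenius norm on the open cone of positive definite matrices (to which both $H_j(\theta)$ for large $j$ and $\Psi(\theta)$ belong) implies convergence to $\|\Psi^{-\frac{1}{2}}(\theta)(\Psi(z)-\Psi(\theta))\Psi^{-\frac{1}{2}}(\theta)\|_F$. Combining these two limits gives the claimed inequality. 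The only technical point requiring care is ensuring the uniform convergence in Step 1 so that inversion and square roots behave continuously; this is supplied by Lemma \ref{lemma_limits} and is essentially the main obstacle in the argument.
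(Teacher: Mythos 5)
Your proposal is correct and follows essentially the same route as the paper: invoke Lemma \ref{lemma_limits} to approximate $\Psi$ by a sequence of genuine log-barrier Hessians $H_j$, apply Lemma \ref{Lemma_frobenius_strong_self_concordance} to each $H_j$ once $\|\theta-z\|_{H_j(\theta)}<1$ (which holds for large $j$ by convergence of the local norms), and pass to the limit using continuity of the relevant matrix operations on the positive-definite cone. Your added remark that Lemma \ref{lemma_limits} as stated covers only the specific constant $c=\alpha\eta^{-1}$ and needs a cosmetic reparametrization for general $c\ge 0$ is a fair observation about a point the paper's own proof glosses over.
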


\begin{proof}
Let $\theta, z \in \mathrm{Int}(K)$ be any two points in $\mathrm{Int}(K)$ such that $\|\theta-z\|_{\Psi(\theta)} <1$.
From  Lemma \ref{lemma_limits}, we have that there exits a sequence of matrix functions $\{H_j\}_{j=1}^{\infty}$, where each matrix
 is the Hessian of a log-barrier function on $K$, such that for all $w \in \mathrm{Int}(K)$,
\begin{equation}\label{eq_S3}
    \lim_{j\rightarrow \infty} H_j(w) =  \Phi(w),
\end{equation}
uniformly in $w$, and 
\begin{equation}\label{eq_S2}
    \lim_{j\rightarrow \infty} (H_j(w))^{-1} = (\Phi(w))^{-1},
\end{equation}
    uniformly in $w$.
    Thus, we have
    \begin{align}  \label{eq_R3}
  \lim_{j \rightarrow \infty}   \|\theta-z\|_{H_j(\theta)} &=   \lim_{j \rightarrow \infty}  \sqrt{(\theta-z)^\top H_j(\theta) (\theta-z)}  \nonumber\\
  &=   \sqrt{(\theta-z)^\top \lim_{j \rightarrow \infty} H_j(\theta) (\theta-z)} \nonumber\\
  &\stackrel{\textrm{Eq. }\eqref{eq_S3}}{=}    \sqrt{(\theta-z)^\top \Psi(\theta) (\theta-z)}  \nonumber\\
 &= \|\theta-z\|_{\Psi(\theta)}.
    \end{align}
    In particular,  since $\|\theta-z\|_{\Psi(\theta)} <1$,  \eqref{eq_R3} implies that there exists $J \in \mathbb{N}$ such that 
    \begin{equation}\label{eq_R5}
\|\theta-z\|_{H_j(\theta)} < 1 \qquad \qquad \forall j \geq J.
    \end{equation}
    Moreover, since $H_j(w)$  is the Hessian of a log-barrier function on $K$ for every $j \in \mathbb{N}$, by Lemma \ref{Lemma_frobenius_strong_self_concordance} and \eqref{eq_R5} we have that
   \begin{equation} \label{eq_R1}
       \|H_j^{-\frac{1}{2}}(\theta)(H_j(z)-H_j(\theta))H_j^{-\frac{1}{2}}(\theta)\|_F \leq \frac{\|\theta-z\|_{H_j(\theta)}}{(1-\|\theta-z\|_{H_j(\theta)} )^2} \qquad \forall j \geq J.
    \end{equation}
Moreover,
\begin{align}\label{eq_R2}
&\lim_{j \rightarrow \infty}  \|H_j^{-\frac{1}{2}}(\theta)(H_j(z)-H_j(\theta))H_j^{-\frac{1}{2}}(\theta)\|_F \nonumber\\
&\stackrel{\textrm{Eq. }\eqref{eq_R1}}{=}  \| (\lim_{j \rightarrow \infty}  H_j^{-\frac{1}{2}}(\theta))(\lim_{j \rightarrow \infty}H_j(z)- \lim_{j \rightarrow \infty} H_j(\theta)) (\lim_{j \rightarrow \infty} H_j^{-\frac{1}{2}}(\theta))\|_F \nonumber\\
&\stackrel{\textrm{Eq. } \eqref{eq_S3}, \eqref{eq_S2}}{=} \|\Psi^{-\frac{1}{2}}(\theta)(\Psi(z)-\Psi(\theta))\Psi^{-\frac{1}{2}}(\theta)\|_F 
\end{align}
Therefore, plugging \eqref{eq_R2} and \eqref{eq_R3} into \eqref{eq_R1}, we have that
    \begin{equation*}
  \|\Psi^{-\frac{1}{2}}(\theta)(\Psi(z)-\Psi(\theta))\Psi^{-\frac{1}{2}}(\theta)\|_F \leq \frac{\|x-y\|_{\Psi(\theta)}}{(1-\|\theta-z\|_{\Psi(\theta)} )^2}.
    \end{equation*}
\end{proof}

\begin{lemma}\label{proposition_slack}
Let $z = \theta + \Phi(\theta)^{-\frac{1}{2}} \xi$ where $\xi \sim N(0,I_d)$.
Then we have with probability at least $1-\gamma$ that
\begin{equation*}
\|\hat{S}(\theta)^{-2} \hat{S}(z)^2 - I_m\|_F \leq \frac{1}{1000 \log(\frac{1}{\gamma})}.
\end{equation*}
\end{lemma}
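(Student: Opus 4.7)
My plan is to reduce the Frobenius-norm bound to a direct computation on the diagonal entries and then control the length of the Dikin step in the unregularized log-barrier metric by Hanson-Wright concentration. First, observe that $\hat{S}(\theta)^{-2}\hat{S}(z)^2 - I_{m+d}$ is a block-diagonal matrix whose lower $d\times d$ block vanishes, since the bottom block $\alpha^{1/2}\eta^{-1/2}I_d$ of $\hat{S}$ does not depend on its argument. Hence the Frobenius norm collapses to $\bigl(\sum_{i=1}^m (s_i(z)^2/s_i(\theta)^2 - 1)^2\bigr)^{1/2}$, where $s_i(\theta) := a_i^\top \theta - b_i$.

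Introducing normalized displacements $u_i := a_i^\top(z-\theta)/s_i(\theta)$, the identity $s_i(z)/s_i(\theta) = 1+u_i$ makes each diagonal entry equal to $(1+u_i)^2-1 = 2u_i + u_i^2$. The key algebraic fact is $\sum_{i=1}^m u_i^2 = (z-\theta)^\top A^\top S(\theta)^{-2}A(z-\theta) = \|z-\theta\|_{H(\theta)}^2$, where $H(\theta) := A^\top S(\theta)^{-2}A$ is the (unregularized) log-barrier Hessian. Writing $r := \|z-\theta\|_{H(\theta)}$, one has $|u_i| \leq r$ for every $i$, so when $r \leq 1/2$ each $(2u_i+u_i^2)^2 \leq (5/2)^2 u_i^2$, and summing yields $\|\hat{S}(\theta)^{-2}\hat{S}(z)^2 - I_{m+d}\|_F \leq (5/2)\|z-\theta\|_{H(\theta)}$.

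It then suffices to bound $\|z-\theta\|_{H(\theta)}$ with probability at least $1-\gamma$. Since $\Phi(\theta) \succeq \alpha^{-1}H(\theta)$, we have $H(\theta) \preceq \alpha\Phi(\theta)$, and writing $z-\theta = \Phi(\theta)^{-1/2}\xi$ with $\xi\sim N(0,I_d)$ gives $\|z-\theta\|_{H(\theta)}^2 = \xi^\top N\xi$, where $N := \Phi(\theta)^{-1/2}H(\theta)\Phi(\theta)^{-1/2}$ is PSD with $\|N\|_{\mathrm{op}} \leq \alpha$, $\operatorname{tr}(N) \leq \alpha d$, and consequently $\|N\|_F^2 \leq \alpha^2 d$. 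The Hanson-Wright inequality (Lemma~\ref{lemma_HR}) applied to $\xi^\top N\xi - \operatorname{tr}(N)$ then yields $\|z-\theta\|_{H(\theta)}^2 = O\bigl(\alpha(d+\log(1/\gamma))\bigr)$ with probability at least $1-\gamma$. Substituting the algorithm's setting of $\alpha$ (which scales like $1/(d\log(1/\gamma))$ up to a small constant) and combining with the previous paragraph produces the stated $1/(1000\log(1/\gamma))$ bound.

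The main delicacy I anticipate is the constant-tracking in the concentration step. A naive bound $\|z-\theta\|_{H(\theta)}^2 \leq \alpha\|\xi\|_2^2$ combined with a crude $\chi^2$ tail for $\|\xi\|_2^2$ produces only $O(1/\sqrt{\log(1/\gamma)})$, which is weaker than what the claim requires; the extra $\sqrt{\log(1/\gamma)}$ factor must come from the sub-Gaussian branch of Hanson-Wright, exploiting the sharper $\|N\|_F \leq \alpha\sqrt{d}$ (concentration of the spectrum of $N$) rather than the blunt $\|N\|_{\mathrm{op}}\sqrt{d}$. I do not expect to invoke Lemma~\ref{Lemma_Frobenius_2} directly here: that lemma controls the projected quantity $\|\Psi(\theta)^{-1/2}(\Psi(z)-\Psi(\theta))\Psi(\theta)^{-1/2}\|_F$, which, through the factorization $\Psi(\theta) = \hat{A}^\top \hat{D}_\theta \hat{A}$, equals $\|M^\top (\hat{D}_\theta^{-1}\hat{D}_z - I)M\|_F$ with $M := \hat{D}_\theta^{1/2}\hat{A}\,\Psi(\theta)^{-1/2}$ having orthonormal columns; this is a \emph{lower} bound on the diagonal Frobenius norm appearing in the claim (projection cannot increase Frobenius norm), so it gives an inequality in the wrong direction, making the direct route through the $u_i$'s the clean path.
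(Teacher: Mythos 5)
Your route is genuinely different from the paper's, and the structural part of it is cleaner. The paper proves this lemma by reducing to the projected quantity $\|\Psi(\theta)^{-1/2}(\Psi(z)-\Psi(\theta))\Psi(\theta)^{-1/2}\|_F$ and then invoking Lemma~\ref{Lemma_Frobenius_2} (which it in turn derives from Lemma~\ref{Lemma_frobenius_strong_self_concordance} via the limit-of-barriers argument, Lemma~\ref{lemma_limits}). You instead reduce directly to the diagonal entries, write $s_i(z)/s_i(\theta) = 1+u_i$ with $\sum_i u_i^2 = \|z-\theta\|_{H(\theta)}^2$, and then control that local-norm length by concentration. Your decomposition is both more elementary and avoids the limit trick.

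You are also right that the paper's chain contains a directional error. With $M := \hat{D}_\theta^{1/2}\hat{A}\Psi(\theta)^{-1/2}$ one has $M^\top M = I_d$, and $\Psi(\theta)^{-1/2}(\Psi(z)-\Psi(\theta))\Psi(\theta)^{-1/2} = M^\top B M$ where $B := \hat{D}_\theta^{-1/2}(\hat{D}_z-\hat{D}_\theta)\hat{D}_\theta^{-1/2}$ is the diagonal matrix whose Frobenius norm the lemma wants to bound. Since $M$ has orthonormal columns, $\|M^\top B M\|_F \le \|B\|_F$, so the paper's second step, which claims $\|B\|_F \le \|M^\top B M\|_F$, is backwards. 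Lemma~\ref{Lemma_Frobenius_2} therefore bounds a quantity that is a \emph{lower} bound on what is needed. So the paper's proof, as written, does not establish the claim.

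However, your own argument does not rescue the stated constant either, and you half-suspect this in your last paragraph. The obstruction is not in the deviation branch of Hanson--Wright but in the mean. Writing $N := \Phi(\theta)^{-1/2}H(\theta)\Phi(\theta)^{-1/2}$, you correctly get $\|N\|_{\mathrm{op}}\le\alpha$, $\operatorname{tr}(N)\le \alpha d$, $\|N\|_F\le\alpha\sqrt{d}$, and hence with probability $1-\gamma$,
\[
r^2 = \xi^\top N\xi \;\le\; \operatorname{tr}(N) + O\!\bigl(\|N\|_F\sqrt{\log(1/\gamma)} + \|N\|_{\mathrm{op}}\log(1/\gamma)\bigr) \;\le\; \alpha d + O\bigl(\alpha\sqrt{d\log(1/\gamma)} + \alpha\log(1/\gamma)\bigr).
\]
The term $\operatorname{tr}(N)\asymp\alpha d$ is a bias and no concentration inequality can shrink it; for a generic polytope it really is of order $\alpha d$. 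With the paper's setting $\alpha=\Theta\bigl(1/(d\log(1/\gamma))\bigr)$ this gives $r = \Theta\bigl(1/\sqrt{\log(1/\gamma)}\bigr)$ and hence $\|B\|_F\le\tfrac{5}{2}r = \Theta\bigl(1/\sqrt{\log(1/\gamma)}\bigr)$, which is an order of magnitude weaker than the stated $1/(1000\log(1/\gamma))$ for small $\gamma$. The switch from the operator-norm branch to the Frobenius-norm branch does not buy the missing $\sqrt{\log(1/\gamma)}$ factor, because both branches control only the fluctuation around $\operatorname{tr}(N)$, not $\operatorname{tr}(N)$ itself. To actually prove the lemma as stated one would need $\alpha\lesssim 1/(d\log^2(1/\gamma))$, i.e., a different hyperparameter setting (and this would only affect the mixing time by polylogarithmic factors, so the downstream theorem would survive); as stated, neither your proof nor the paper's closes this gap.
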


\begin{proof}
Setting $\Psi(\theta)= \hat{A}^\top \hat{S}(\theta)^2 \hat{A}$, we have that $\Psi(\theta)= \alpha \Phi(\theta) = H(\theta) + \alpha \eta^{-1} I_d$.
Hence, we have by Lemma \ref{Lemma_Frobenius_2} that, with probability at least $1-\gamma$,
\begin{align*}
\|\hat{S}(\theta)^{-2} \hat{S}(z)^2 - I_m\|_F
&= \|\hat{S}(\theta)^{-1} \hat{S}(z)^2 \hat{S}(\theta)^{-1} - I_m\|_F\\
&\leq \|(\hat{A}^\top \hat{S}(\theta)^2 \hat{A})^{-\frac{1}{2}} \hat{A}^\top \hat{S}(z)^2 \hat{A} (\hat{A}^\top \hat{S}(\theta)^2 \hat{A})^{-\frac{1}{2}} - I_d\|_F\\
&=\|\Phi^{-\frac{1}{2}}(x)\Phi(y)\Phi^{-\frac{1}{2}}(x) - I_d\|_F\\
  &=\|\Phi^{-\frac{1}{2}}(x)(\Phi(y)-\Phi(x))\Phi^{-\frac{1}{2}}(x)\|_F\\
    &= \|\alpha^{-\frac{1}{2}}\Phi^{-\frac{1}{2}}(x)(\alpha\Phi(y)-\alpha\Phi(x))\alpha^{-\frac{1}{2}}\Phi^{-\frac{1}{2}}(x)\|_F\\
    &= \|\Psi^{-\frac{1}{2}}(x)(\Psi(y)-\Psi(x))\Psi^{-\frac{1}{2}}(x)\|_F\\
    &\leq \frac{\|x-y\|_{\Psi(x)}}{(1-\|x-y\|_{\Psi(x)} )^2}\\
    &= \frac{\alpha^{-\frac{1}{2}}\|\theta-z\|_{\Phi(\theta)}}{(1-\alpha^{-\frac{1}{2}}\|\theta-z\|_{\Phi(\theta)} )^2}\\
    &\leq \alpha^{-\frac{1}{2}} 8\sqrt{d} \log^{\frac{1}{2}}\left(\frac{1}{\gamma}\right)\\
    &\leq \frac{1}{1000},
\end{align*}
where the second inequality holds by Lemma \ref{Lemma_Frobenius_2}.
The fourth inequality holds because $\alpha = \frac{1}{10^5 d} \log^{-1}\left(\frac{1}{\gamma}\right)$.  The third inequality holds with probability at least $1-\gamma$ because 
\begin{equation*}
    \|\theta-z\|_{\Phi(\theta)}^2 = (\theta-z)^\top\Phi(\theta) (\theta-z) = (\Phi(\theta)^{-\frac{1}{2}} \xi)^\top\Phi(\theta) ( \Phi(\theta)^{-\frac{1}{2}} \xi) \leq 8\sqrt{d} \log(\frac{1}{\gamma})
    \end{equation*}
    with probability at least $1-\gamma$ by the Hanson-Wright  inequality (Lemma \ref{lemma_HR}).
\end{proof}

\begin{lemma} \label{lemma_HansonWright_bound}
Let $W \in \mathbb{R}^{{m+d}\times {m+d}}$ be a diagonal matrix, and let $t>0$.
Define $\Theta(t):= \hat{A}^\top(I_{m+d} + t(W-I_{m+d}))\hat{A}.$
Let $v \sim N(0, I_d)$, and let
\begin{equation} 
    Y= v^\top \Theta(t)^{-1}\hat{A}^\top(W-I_m)\hat{A} v + \log \det \hat{A}^\top \hat{A}.
\end{equation}
Suppose that $\frac{1}{2}I_m \preceq  W \preceq 2I_m$  and that $\|W-I_m\|_F \leq c$ for some $c>0$.
Then 
\begin{equation}\label{eq_S13}
    \mathbb{P}(|Y - \mathbb{E}[Y]|\geq s \cdot c ) \leq e^{-\frac{1}{8}s} \qquad \forall s \geq 0.
    \end{equation}
    
\end{lemma}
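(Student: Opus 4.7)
The plan is to recognize $Y-\mathbb{E}[Y]$ as a centered Gaussian quadratic form in $v$ and then apply the Hanson--Wright inequality of Lemma~\ref{lemma_HR}. Let $M := \Theta(t)^{-1}\hat{A}^\top(W-I_{m+d})\hat{A}$ and $a := \log\det(\hat{A}^\top\hat{A})$, so $Y = v^\top M v + a$ and hence $Y-\mathbb{E}[Y] = v^\top M v - \mathrm{tr}(M) = v^\top\tilde{M}v - \mathrm{tr}(\tilde{M})$, where $\tilde{M}:=(M+M^\top)/2$ is the symmetric part of $M$. Applying Lemma~\ref{lemma_HR} to this centered quadratic form yields, for every $u>0$,
\[
\mathbb{P}\bigl(|Y-\mathbb{E}[Y]|>u\bigr) \le 2\exp\!\Bigl(-8\min\!\Bigl(u^2/\|\tilde{M}\|_F^{\,2},\; u/\|\tilde{M}\|_{\mathrm{op}}\Bigr)\Bigr),
\]
so the problem reduces to bounding $\|\tilde{M}\|_F \le O(c)$ and $\|\tilde{M}\|_{\mathrm{op}} \le O(c)$ with small enough constants, after which we plug in $u = s\cdot c$.

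To bound these norms I would exploit the factorization $\Theta(t) = \hat{A}^\top\Xi(t)\hat{A}$, where $\Xi(t) := (1-t)I_{m+d} + tW$. The hypothesis $\tfrac12 I \preceq W\preceq 2I$ together with $t\in[0,1]$ forces $\tfrac12 I_{m+d}\preceq \Xi(t)\preceq 2 I_{m+d}$. Define the $(m+d)\times d$ matrix $Q := \Xi(t)^{1/2}\hat{A}\,\Theta(t)^{-1/2}$; a direct check gives $Q^\top Q = I_d$, so $Q$ has orthonormal columns and $P := QQ^\top$ is an orthogonal projection of rank $d$ on $\mathbb{R}^{m+d}$. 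Letting $D' := \Xi(t)^{-1/2}(W-I_{m+d})\Xi(t)^{-1/2}$ (a diagonal matrix) and $N := \Theta(t)^{-1/2}\hat{A}^\top(W-I_{m+d})\hat{A}\,\Theta(t)^{-1/2}$, a short calculation yields the clean representation $N = Q^\top D'Q$. Because each diagonal entry of $\Xi(t)$ lies in $[1/2,2]$, one has $\|D'\|_F \le 2\|W-I_{m+d}\|_F \le 2c$ and $\|D'\|_{\mathrm{op}}\le 2$. Using $P\preceq I_{m+d}$ and the Frobenius Cauchy--Schwarz inequality, $\|N\|_F^2 = \mathrm{tr}(PD'PD') \le \|PD'P\|_F\,\|D'\|_F \le \|D'\|_F^2 \le 4c^2$, and analogously $\|N\|_{\mathrm{op}} \le \|D'\|_{\mathrm{op}} \le 2$. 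Since $M$ and $N$ are similar via $M = \Theta(t)^{-1/2}N\,\Theta(t)^{1/2}$, we obtain $\mathrm{tr}(M^2) = \mathrm{tr}(N^2) = \|N\|_F^2 \le 4c^2$; combined with the identity $\|\tilde{M}\|_F^2 = \tfrac12(\mathrm{tr}(M^2) + \|M\|_F^2)$, the remaining task is to control $\|M\|_F^2 = \mathrm{tr}(\Theta(t)^{-1}N\Theta(t)N)$ using the same projection structure and the fact that $\Lambda := \hat{A}\Theta(t)^{-1}\hat{A}^\top = \Xi(t)^{-1/2}P\,\Xi(t)^{-1/2}$ has operator norm at most $2$.

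Plugging $\|\tilde{M}\|_F \le C_1 c$ and $\|\tilde{M}\|_{\mathrm{op}} \le C_2 c$ into the Hanson--Wright bound with $u = sc$ produces $\min(u^2/\|\tilde M\|_F^2,\,u/\|\tilde M\|_{\mathrm{op}}) \ge \min(s^2/C_1^2,\,s/C_2)$, and the factor $8$ in the Hanson--Wright exponent (together with the trivial estimate $\mathbb{P}(\cdot)\le 1$ absorbing the leading $2$ and handling small $s$) delivers the claimed tail $\mathbb{P}(|Y-\mathbb{E}[Y]|\ge sc)\le e^{-s/8}$ for every $s\ge 0$. The main obstacle I expect is the final step of bounding $\|M\|_F$ without any dependence on the conditioning of $\hat{A}$: a naive sub-multiplicative estimate $\|M\|_F \le \|\Theta(t)^{-1/2}\|_{\mathrm{op}}\,\|N\|_F\,\|\Theta(t)^{1/2}\|_{\mathrm{op}}$ would introduce $\kappa(\Theta(t))$, so the argument must carefully cancel this through the projection identity $N = Q^\top D'Q$ and the Schur-product / projection structure carried by $\Lambda$, mirroring the clean bound already obtained for $\|N\|_F$.
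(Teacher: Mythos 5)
Your proposal is incomplete at precisely the point you yourself flag. The clean part works: via the similarity $M = \Theta(t)^{-1/2}N\,\Theta(t)^{1/2}$ and the projection identity $N = Q^\top D'Q$ you obtain $\mathrm{tr}(M^2) = \mathrm{tr}(N^2) = \|N\|_F^2 \le 4c^2$. But the identity $\|\tilde M\|_F^2 = \tfrac12\bigl(\mathrm{tr}(M^2) + \|M\|_F^2\bigr)$ leaves $\|M\|_F^2$ uncontrolled, and $M$ is only similar --- not orthogonally similar --- to $N$, so $\|M\|_F$ does not inherit the bound on $\|N\|_F$. The naive sub-multiplicative estimate $\|M\|_F \le \|\Theta(t)^{-1/2}\|_2\,\|N\|_F\,\|\Theta(t)^{1/2}\|_2$ brings in $\kappa(\Theta(t))$, as you observe, and the closing sentence about ``the Schur-product / projection structure carried by $\Lambda$'' is a hope rather than an argument; no cancellation is exhibited. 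Without it, the Hanson--Wright input is missing and \eqref{eq_S13} does not follow from the steps you give.

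It is worth noting that the paper's own proof does not clear this hurdle either. After a chain of operator-norm/trace estimates and applications of the von Neumann trace inequality, the paper arrives at $\|\Theta(t)^{-1}\hat A^\top(W-I)\hat A\|_F \le 2\,\|(\hat A^\top\hat A)^{-1}\|_2\cdot\|\hat A\hat A^\top\|_2\cdot\|W-I\|_F$ and then asserts this equals $2\|W-I\|_F$, silently discarding the factor $\|(\hat A^\top\hat A)^{-1}\|_2\cdot\|\hat A\hat A^\top\|_2$, which is the condition number of $\hat A^\top\hat A$ and equals $1$ only when $\hat A^\top\hat A$ is a scalar multiple of $I_d$ (that is, when $A$ has orthogonal columns of equal norm). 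So the obstacle you identify --- Frobenius-norm control of $M$ with no dependence on the conditioning of $\hat A$ --- is genuine, and neither your sketch nor the paper's chain of inequalities contains the ingredient needed to overcome it. Any fix must either exploit additional structure (for instance, in the intended application the bottom $d$ diagonal entries of $W$ equal $1$, so $W - I$ is supported on the $A$-block) or replace $\|M\|_F$ with a similarity-invariant quantity in the concentration step.
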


\begin{proof}
\begin{align} \label{eq_S14}
  \|\Theta(t)^{-1}\hat{A}^\top(W-I_m)\hat{A}\|_F
    & \leq  \|(\hat{A}^\top(tW +(1-t)I_m)\hat{A})^{-1}\|_2 \cdot \|A^\top(W-I_m)A\|_F \nonumber\\
    & \leq  \|(\hat{A}^\top(\frac{1}{2}I_m)\hat{A})^{-1}\|_2 \cdot \|\hat{A}^\top(W-I_m)\hat{A}\|_F\nonumber\\
        & =  2\|(\hat{A}^\top \hat{A})^{-1}\|_2 \cdot \|\hat{A}^\top(W-I_m)\hat{A}\|_F\nonumber\\
         & =  2\|(\hat{A}^\top \hat{A})^{-1}\|_2 \cdot \mathrm{Tr}^{\frac{1}{2}}((\hat{A}^\top(W-I_m)\hat{A})^2)\nonumber\\
          & =  2\|(\hat{A}^\top \hat{A})^{-1}\|_2 \cdot \mathrm{Tr}^{\frac{1}{2}}(\hat{A}^\top(W-I_m)\hat{A} \hat{A}^\top(W-I_m)\hat{A})\hat{A})^{-1}\|_2\nonumber\\
&\qquad\qquad \cdot \mathrm{Tr}^{\frac{1}{2}}(\hat{A}\hat{A}^\top(W-I_m)\hat{A} \hat{A}^\top(W-I_m))\nonumber\\
 & \leq  2\|(\hat{A}^\top \hat{A})^{-1}\|_2 \cdot \sqrt{\|\hat{A}\hat{A}^\top\|_2}\cdot |\mathrm{Tr}^{\frac{1}{2}}((W-I_m)\hat{A} \hat{A}^\top(W-I_m))|\nonumber\\
  & \leq  2\|(\hat{A}^\top \hat{A})^{-1}\|_2 \cdot \sqrt{\|\hat{A}\hat{A}^\top\|_2}\cdot |\mathrm{Tr}^{\frac{1}{2}}(\hat{A} \hat{A}^\top(W-I_m)^2)|\nonumber\\
   & \leq  2\|(\hat{A}^\top \hat{A})^{-1}\|_2 \cdot \|\hat{A}\hat{A}^\top\|_2\cdot |\mathrm{Tr}^{\frac{1}{2}}((W-I_m)^2)|\nonumber\\
   & =  2\|W-I_m\|_F,\nonumber\\
   & \leq 2c,
\end{align}
where the fourth and fifth inequalities hold by the von Neumann trace inequality (Lemma \ref{Von_Neumann_trace}).
Thus, by the Hanson-Wright inequality (Lemma \ref{lemma_HR}) we have that for every $s \geq 0$,
\begin{align}\label{eq_S15}
&\mathbb{P}(|v^\top \Theta(t)^{-1}\hat{A}^\top(W-I_m)\hat{A} v - \mathbb{E}[v^\top \Theta(t)^{-1}\hat{A}^\top(W-I_m)\hat{A} v]| \geq s)\nonumber\\
& \leq 2 \exp \left(-\frac{1}{8} \min \left(\frac{s^2}{\|\Theta(t)^{-1}\hat{A}^\top(W-I_m)\hat{A} \|_F^2}, \frac{s}{\|\Theta(t)^{-1}\hat{A}^\top(W-I_m)\hat{A} \|_2} \right)\right)\nonumber\\
& \leq 2 \exp \left(-\frac{1}{8} \frac{s}{\|\Theta(t)^{-1}\hat{A}^\top(W-I_m)\hat{A} \|_F}\right).
\end{align}
Plugging \eqref{eq_S14} into \eqref{eq_S15}, we get that,
\begin{equation*}
    \mathbb{P}(|v^\top \Theta(t)^{-1}\hat{A}^\top(W-I_m)\hat{A} v - \mathbb{E}[v^\top \Theta(t)^{-1}\hat{A}^\top(W-I_m)\hat{A} v]| \geq s c) \leq 2 e^{-\frac{1}{8}s} \qquad  \forall s \geq 0.
\end{equation*}
         \end{proof}

\noindent

\begin{proof}[of Theorem \ref{thm_soft_threshold_Dikin_sparse}]

\paragraph{Bounding the runtime.}

{\em Cost of maintaining and using a linear solver:} By Lemma \ref{proposition_slack}, we have that at every iteration of the outer ``for'' loop in Algorithm \ref{alg_Soft_Dikin_Walk_sparse},
    \begin{equation}\label{eq_S16}
        \sum_{i=1}^d \left( \frac{\hat{S}(z)^2[i]-\hat{S}(\theta)^2[i]}{\hat{S}(\theta)^2[i]}\right)^2 = \|\hat{S}(\theta)^{-2} \hat{S}(z)^2 - I_m\|_F^2 \leq O(1),
    \end{equation}
    with probability at least $1-\gamma$.
    Rewriting the l.h.s. of \eqref{eq_S16}, we also have that with probability at least $1-\gamma$,
        \begin{equation}\label{eq_S17}
        \sum_{i=1}^d \left( (t \hat{S}(z)^2[i]\hat{S}(\theta)^{-2}[i] + (1-t)) - 1\right)^2  \leq \sum_{i=1}^d \left( \hat{S}(z)^2[i] \hat{S}(\theta)^{-2}[i] - 1\right)^2 \leq O(1),
    \end{equation}
    for any $t\in[0,1]$.
To implement Algorithm \ref{alg_Soft_Dikin_Walk_sparse}, we use efficient inverse maintenance to maintain linear solvers for $\Phi(\theta) = A^\top \hat{S}(\theta)^{-2} A$ and $\hat{A}^\top \Psi(\hat{S}(\theta)^{-2} \hat{S}(z)^2,t)\hat{A} = \hat{A}^\top (t\hat{S}(\theta)^{-2} \hat{S}(z)^2 + (1-t)I_{m+d}) \hat{A}$ 
for $T$ rounds, where $T$ is the number of Dikin walk steps.
Plugging Inequality \eqref{eq_S16} (for the $\Phi(\theta)$ linear system solver) and Inequality \eqref{eq_S17} (for the $\hat{A}^\top \Psi(\hat{S}(\theta)^{-2} \hat{S}(z)^2,t)\hat{A}$ linear system solver) into Lemma \ref{lemma_maintenance}, we get that maintaining these linear system solvers for $T$ steps (Lines \ref{maintenance_1a}, \ref{maintenance_2a}, \ref{maintenance_1b}, and \ref{maintenance_2b} in Algorithm \ref{alg_Soft_Dikin_Walk_sparse}), and applying each linear solver at most $\mathcal{N}$ times at each step (Lines \ref{maintenance_3a} and \ref{maintenance_3b}), takes $O(T\mathcal{N}(\mathrm{nnz}(A) + d^2) + d^\omega)$ arithmetic operations.

Since $T \geq d^{\omega-2}$, we have that the number of arithmetic operations is $O(T\mathcal{N}(\mathrm{nnz}(A) + d^2))$.

\medskip
\noindent
{\em Cost of other steps:} The remaining steps require less that $O(T\mathcal{N}(\mathrm{nnz}(A) + d^2))$ steps to compute:

\begin{itemize}

\item Each time they are run, Lines \ref{line_a2} and \ref{line_a3} can be accomplished in  $O(\mathrm{nnz}(A))$ arithmetic operations by standard matrix-vector multiplication.
 Sampling a $d$-dimensional Gaussian vector (Lines \ref{line_a4} and \ref{line_a5}) can be done in $O(d) \leq O(\mathrm{nnz}(A))$ arithmetic operations and sampling from the uniform distribution \ref{line_a6} takes $O(1)$ arithmetic operations.
Lines \ref{line_a2}- \ref{line_a6} are called at most $T \mathcal{N}$ times, and thus require $O(T \mathcal{N} \mathrm{nnz}(A))$ arithmetic operations.

\item Every time it is run, Line \ref{line_a7} requires computing $f(\theta)$ and $f(z)$, which can be done in 2 calls to the oracle for the value of $f$.
 It also requires computing $\|z- \theta\|_{\Phi(\theta)}^2$ and $\|\theta- z\|_{\Phi(z)}^2$.
As $$\|z- \theta\|_{\Phi(\theta)}^2 = (z- \theta)^\top \Phi(\theta) (z- \theta) = (z- \theta)^\top \alpha^{-1}\hat{A}^\top \hat{S}({\theta})^2 \hat{A} (z- \theta),$$
this can be done in $O(\mathrm{nnz}(A))$ arithmetic operations using standard matrix-vector multiplication.
For the same reason, computing $\|\theta- z\|_{\Phi(z)}^2$ also takes $O(\mathrm{nnz}(A))$ arithmetic operations.
 Line \ref{line_a7} is called at most $T$ times, and thus contributes $O(T(\mathrm{nnz}(A) + T_f))$ arithmetic operations to the runtime.

\item Each time they are run, Lines \ref{line_a8} and \ref{line_a9} require computing a Taylor series with at most $O(\mathcal{N}^2)$ terms, with each term requiring $O(\mathcal{N})$ multiplications.
Lines \ref{line_a8} and \ref{line_a9} are called at most $T$ times and thus require $O(T \mathcal{N}^3)$ arithmetic operations.

\item Line \ref{line_a1}, which computes $\log \det(\hat{A}^\top \hat{A})$, can be accomplished in $O(md^{\omega-1})$ arithmetic operations using dense matrix multiplication.
Since this step is only run once and $O(md^{\omega-1}) \leq O(T\mathcal{N}(\mathrm{nnz}(A) + d^2))$ arithmetic operations, it does not change the overall runtime by more than a constant factor.

\end{itemize}

\medskip\noindent
{\em Runtime for all steps:} 
Thus, the runtime of Algorithm \ref{alg_Soft_Dikin_Walk_sparse} is $O(T\mathcal{N}(T_f + \mathrm{nnz}(A) + d^2)) = O((md + d L^2 R^2) \times  \log^{2.01}(\frac{md +L^2 R^2 +\log(w)}{\delta}) \log(\frac{w}{\delta})) \times (T_f + \mathrm{nnz}(A) + d^2)$ arithmetic operations in the setting where $f$ is $L$-Lipschitz.

In the setting where $f$ is $\beta$-smooth, the number of arithmetic operations is $O(T \mathcal{N}(T_f + \mathrm{nnz}(A) + d^2)) = O((md + d L^2 R^2) \times  \log^{2.01}(\frac{md +\beta R^2 +\log(w)}{\delta}) \log(\frac{w}{\delta})) \times (T_f + \mathrm{nnz}(A) + d^2)$.

\paragraph{Bounding the total variation distance.}

Plugging Lemma \ref{proposition_slack} into Lemma \ref{lemma_HansonWright_bound}, we get that for every $j \in \{1, \ldots, \mathcal{N}\}$,
\begin{equation} \label{eq_S20}
    \mathbb{P}\left(|Y_{j} - \mathbb{E}[Y_{j}]| \leq \frac{1}{100}\right) \leq 1- \gamma.
\end{equation}
By Proposition \ref{lemma_integral} we have that, 
\begin{align} \label{eq_trace}
\mathbb{E}[Y_{j}] &= \mathbb{E}[v^\top (t \hat{A}^\top \hat{S}(\theta)^{-2} \hat{S}(z) \hat{A} + (1-t) \hat{A}^\top \hat{A})^{-1} ( \hat{A}^\top \hat{S}(\theta)^{-2} \hat{S}(z) \hat{A}-   \hat{A}^\top \hat{A})v] \nonumber\\
&\qquad \qquad+ \log \mathrm{det}(\hat{A}^\top \hat{A})\nonumber\\
&=\int_0^1 \mathrm{tr}\left(\tau \hat{A}^\top  \hat{S}(\theta)^{-2} \hat{S}(z)  \hat{A} + (1-\tau) \hat{A}^\top \hat{A})^{-1} ( \hat{A}^\top  \hat{S}(\theta)^{-2} \hat{S}(z)  \hat{A}-   \hat{A}^\top \hat{A}\right) \mathrm{d} \tau \nonumber\\
&\qquad \qquad+ \log \mathrm{det}(\hat{A}^\top \hat{A})\nonumber\\
&\stackrel{\textrm{Prop. }\ref{lemma_integral}}{=}\log \mathrm{det}(\hat{A}^\top  \hat{S}(\theta)^{-2} \hat{S}(z)  \hat{A}) \nonumber\\
&=\log \mathrm{det}(\Phi(z)) - \log \mathrm{det}(\Phi(\theta)),
\end{align}
where the second equality holds since Algorithm \ref{alg_Soft_Dikin_Walk_sparse} samples the random vector $v \sim N(0,I_{d})$ from the standard Gaussian distribution and the random variable $t \sim \mathrm{unif}([0,1])$ from the uniform distribution on $[0,1]$.

Moreover, by Lemma \ref{lemma_det} we have that
\begin{align*}
\mathbb{P}\left(\mathbb{E}[Y_{j}]\geq \frac{9}{10}\right) 
&\stackrel{\textrm{Eq. }\eqref{eq_trace}}{=}\mathbb{P}\left(\log \mathrm{det}(\Phi(z)) - \log \mathrm{det}(\Phi(\theta))\geq \frac{9}{10}\right)\\
&\stackrel{\textrm{Lemma }\ref{lemma_det}}{\geq} 1- \frac{1}{100} \gamma.
\end{align*}
Next, we show that
    \begin{equation} \label{eq_S19}
    \mathbb{P}(0 \leq X \leq 1) \geq 1-2 {\mathcal{N} } 
 \gamma.
    \end{equation}
    and that
\begin{equation} \label{eq_S18}
    \left|\mathbb{E}[X] - \frac{\frac{\mathrm{det}(\Phi(z))}{\mathrm{det}(\Phi(\theta))}}{\frac{\mathrm{det}(\Phi(\theta))}{\mathrm{det}(\Phi(z))} + \frac{\mathrm{det}(\Phi(z))}{\mathrm{det}(\Phi(\theta))}} \right| \leq \gamma.
    \end{equation}
We first show \eqref{eq_S19} and \eqref{eq_S18} when $\frac{1}{4} \leq Y_{1}  \leq 2 \log(\gamma)$.
In this case we have by \eqref{eq_S20} that with probability at least $1-\gamma$,
\begin{equation}\label{eq_S25}
    \log \mathrm{det}(\Phi(z)) - \log \mathrm{det}(\Phi(\theta) = \mathbb{E}[Y_{1}] \in \left[\frac{1}{4} - \frac{1}{100}, \, \, 2\log\frac{1}{\gamma}\right).
    \end{equation}
To show \eqref{eq_S19}, we first note that, by \eqref{eq_S20} we have that, 
\begin{equation} \label{eq_S22}
    \mathbb{P}\left(\frac{1}{5} \leq Y_{j} 
    \leq 3\log\left(\frac{1}{\gamma}\right), \, \, \, \, \forall j\leq \ell \leq {\mathcal{N} }\right) \geq 1-{\mathcal{N} } \gamma.
\end{equation}
Thus, with probability at least $1-{\mathcal{N} } \gamma$,
 \begin{align} \label{eq_S24}
     X &= 1 + \frac{1}{2}\sum_{k=1}^{2\mathcal{N}-1} (-1)^k \sum_{\ell = 0}^{{\mathcal{N} }}  \frac{1}{\ell !}\prod_{j=1}^\mathcal{\ell}(-2k Y_{j}) \nonumber\\
&\leq 1 + \frac{1}{2}\sum_{k=1}^{2\mathcal{N}-1} (-1)^k 
 \sum_{\ell = 0}^{{\mathcal{N} }} \frac{1}{\ell !}\prod_{j=1}^\mathcal{\ell}(-2k \sup_{1\leq j \leq {\mathcal{N} }} Y_{j}) \nonumber\\
&\leq 1 + \frac{1}{2}\sum_{k=1}^{2\mathcal{N}-1} (-1)^k 
 \exp(-2k \sup_{1\leq j \leq {\mathcal{N} }} Y_{j})\nonumber\\
 &\leq 1 + \frac{1}{2}\sum_{k=1}^{2\mathcal{N}-1} (-1)^k 
 \exp(-2k \sup_{1\leq j \leq {\mathcal{N} }} Y_{j})\nonumber\\
 &\leq \frac{1}{1+\exp(-\sup_{1\leq j \leq {\mathcal{N} }} Y_{j})} \nonumber\\
 &\leq 1,
    \end{align}
where the fourth inequality holds since
\begin{equation*}
    1 + \frac{1}{2}\sum_{k=1}^{2\mathcal{N}-1} (-1)^k 
 \exp(-2k t) \leq 1 + \frac{1}{2}\sum_{k=1}^{\infty} (-1)^k 
 \exp(-2k t)  \qquad \qquad \forall t>0,
\end{equation*}
and since the infinite series $1 + \frac{1}{2}\sum_{k=1}^{\infty} (-1)^k 
 \exp(-2k t) = \frac{1}{2} \tanh(\frac{1}{2}t)+1 = \frac{1}{1+e^{-t}}$ has interval of convergence $t\in (0,\infty]$.

Next, we show that with probability at least  $1-{\mathcal{N} } \gamma$,
 \begin{align} \label{eq_S23}
     X &= 1 + \frac{1}{2}\sum_{k=1}^{2\mathcal{N}-1} (-1)^k \sum_{\ell = 0}^{{\mathcal{N} }}  \frac{1}{\ell !}\prod_{j=1}^\mathcal{\ell}(-2k Y_{j}) \nonumber\\
&\geq 1 + \frac{1}{2}\sum_{k=1}^{2\mathcal{N}-1} (-1)^k 
 \sum_{\ell = 0}^{{\mathcal{N} }} \frac{1}{\ell !}\prod_{j=1}^\mathcal{\ell}(-2k \inf_{1\leq j \leq {\mathcal{N} }} Y_{j})\nonumber\\
&\geq 1 + \frac{1}{2}\sum_{k=1}^{2\mathcal{N}-1} (-1)^k 
 \exp(-2k \inf_{1\leq j \leq {\mathcal{N} }} Y_{j})\nonumber\\
 &= 1 -  \exp(-4 \mathcal{N} \inf_{j \leq \ell \leq {\mathcal{N} }} Y_{j})+ \frac{1}{2}\sum_{k=1}^{2\mathcal{N}} (-1)^k 
 \exp(-2k \inf_{1\leq j \leq {\mathcal{N} }} Y_{j})\nonumber\\
 &\geq \frac{1}{1+\exp(-\inf_{1\leq j \leq {\mathcal{N} }} Y_{j})} \nonumber\\
 &\geq \frac{1}{2} - \exp(-4 \mathcal{N} \inf_{1\leq j \leq {\mathcal{N} }} Y_{j})\nonumber\\
 &>0,
    \end{align}
where the last inequality holds since $ \inf_{1\leq j \leq {\mathcal{N} }} Y_{j} \geq \frac{1}{8}$ with probability at least  $1-{\mathcal{N} } \gamma$ by \eqref{eq_S22}.
Thus, \eqref{eq_S24}  and \eqref{eq_S23} together imply \eqref{eq_S19}.
To show \eqref{eq_S18}, we first note that, since $Y_{1}, Y_{2}, \ldots, Y_{\mathcal{N}}$ are i.i.d. random variables, with probability at least  $1-{\mathcal{N} } \gamma$,
 \begin{align*}
     \mathbb{E}[X] &= 1 + \frac{1}{2}\sum_{k=1}^{2\mathcal{N}-1} \sum_{\ell = 0}^{{\mathcal{N} }} (-1)^k \frac{1}{\ell !}\prod_{j=1}^\mathcal{\ell}(-2k \mathbb{E}[Y_{j}])\\
     &= 1 + \frac{1}{2}\sum_{k=1}^{2\mathcal{N}-1} (-1)^k\sum_{\ell = 0}^{{\mathcal{N} }}  \frac{1}{\ell !}\prod_{j=1}^\mathcal{\ell}(-2k (\log \mathrm{det}(\Phi(z)) - \log \mathrm{det}(\Phi(\theta)))\\
          &= 1 + \frac{1}{2}\sum_{k=1}^{2\mathcal{N}-1} (-1)^k\sum_{\ell = 0}^{{\mathcal{N} }}  \frac{1}{\ell !}(-2k (\log \mathrm{det}(\Phi(z)) - \log \mathrm{det}(\Phi(\theta)))^\ell\\
      &= \frac{1}{1+\exp(-\log \mathrm{det}(\Phi(z)) - \log \mathrm{det}(\Phi(\theta)))}\\
      &\qquad \qquad - \frac{1}{2}\sum_{k=1}^{2\mathcal{N}-1} (-1)^k\sum_{\ell = {\mathcal{N} }+1}^{\infty}  \frac{1}{\ell !}(-2k (\log \mathrm{det}(\Phi(z)) - \log \mathrm{det}(\Phi(\theta)))^\ell\\
      &\qquad \qquad - \frac{1}{2}\sum_{k= 2\mathcal{N}}^{\infty} (-1)^k\sum_{\ell = 0}^{\infty}  \frac{1}{\ell !}(-2k (\log \mathrm{det}(\Phi(z)) - \log \mathrm{det}(\Phi(\theta)))^\ell,\\
            &= \frac{1}{1+\exp(-\log \mathrm{det}(\Phi(z)) - \log \mathrm{det}(\Phi(\theta)))}\\
      &\qquad \qquad - \frac{1}{2}\sum_{k=1}^{2\mathcal{N}-1} (-1)^k\sum_{\ell = {\mathcal{N} }+1}^{\infty}  \frac{1}{\ell !}(-2k (\log \mathrm{det}(\Phi(z)) - \log \mathrm{det}(\Phi(\theta)))^\ell\\
      &\qquad \qquad - \frac{1}{2}\sum_{k= 2\mathcal{N}}^{\infty} (-1)^k\exp(-2k (\log \mathrm{det}(\Phi(z)) - \log \mathrm{det}(\Phi(\theta))),
     \end{align*}
     where the fourth equality holds since the infinite series $1 + \frac{1}{2}\sum_{k=1}^{\infty} (-1)^k 
 \exp(-2k t) = \frac{1}{1+e^{-t}}$ has interval of convergence $t\in (0,\infty]$, and $\log \mathrm{det}(\Phi(z)) - \log \mathrm{det}(\Phi(\theta) \in \left[\frac{1}{4} - \frac{1}{100}, \, \, 2\log(\frac{1}{\gamma})\right)$ by \eqref{eq_S25}.
Thus,
 \begin{align*}
      &\left|\mathbb{E}[X] - \frac{\frac{\mathrm{det}(\Phi(z))}{\mathrm{det}(\Phi(\theta))}}{\frac{\mathrm{det}(\Phi(\theta))}{\mathrm{det}(\Phi(z))} + \frac{\mathrm{det}(\Phi(z))}{\mathrm{det}(\Phi(\theta))}} \right|\\
     &=\left | \mathbb{E}[X] - \frac{1}{1+\exp(-\log \mathrm{det}(\Phi(z)) - \log \mathrm{det}(\Phi(\theta)))} \right |\\
     &= \bigg | \frac{1}{2}\sum_{k=1}^{\infty} (-1)^k\sum_{\ell = {\mathcal{N} }+1}^{\infty}  \frac{1}{\ell !}(-2k (\log \mathrm{det}(\Phi(z)) - \log \mathrm{det}(\Phi(\theta)))^\ell\\        
     &\qquad \qquad + \frac{1}{2}\sum_{k= 2\mathcal{N}}^{\infty} (-1)^k\exp(-2k (\log \mathrm{det}(\Phi(z)) - \log \mathrm{det}(\Phi(\theta))), \bigg |\\
     &\leq \gamma,
     \end{align*}
     since $\mathcal{N} \geq 10\log(\frac{1}{\gamma})$ and $\log \mathrm{det}(\Phi(z)) - \log \mathrm{det}(\Phi(\theta) \in \left[\frac{1}{4} - \frac{1}{100}, \, \, 2\log(\frac{1}{\gamma})\right)$.
This proves \eqref{eq_S18}.
Thus, we have shown \eqref{eq_S18} and \eqref{eq_S19} in the setting when $\frac{1}{4} \leq Y_{1}  \leq 2 \log(\gamma)$.
The proof of  \eqref{eq_S18} and \eqref{eq_S19} when $Y_{1} > \frac{1}{4} $ is identical, except that we use the Taylor series expansion for $\frac{1}{1+e^{-t}} = \frac{1}{2} \tanh(\frac{1}{2}t)+1$ about $0$,  $\frac{1}{1+e^{-t}} = \sum_{\ell=0}^\infty c_\ell t^\ell$ which has interval of convergence $t\in (-1,1)$, in place of the infinite series $1 + \frac{1}{2}\sum_{k=1}^{\infty} (-1)^k 
 \exp(-2k t) = \frac{1}{2} \tanh(\frac{1}{2}t)+1 = \frac{1}{1+e^{-t}}$ which has interval of convergence $t\in (0,\infty]$.

The proof for \eqref{eq_S18} and \eqref{eq_S19} in the case when $Y_{1} >  2 \log(\frac{1}{\gamma})$ is trivial, as in this case the algorithm sets $X=1$ and $|1- \frac{1}{1+e^{-t}}| \leq \gamma$ for $t > 2 \log(\gamma)$.
Thus, \eqref{eq_S18} and \eqref{eq_S19} together imply that 
 \begin{equation}\label{eq_W1}
      \left|\mathbb{E}[\min(\max( X, 0), 1)] - \frac{\frac{\mathrm{det}(\Phi(z))}{\mathrm{det}(\Phi(\theta))}}{\frac{\mathrm{det}(\Phi(\theta))}{\mathrm{det}(\Phi(z))} + \frac{\mathrm{det}(\Phi(z))}{\mathrm{det}(\Phi(\theta))}} \right| \leq 3\mathcal{N} \gamma.
\end{equation}
Let $q_i = \mathbb{E}[\frac{1}{2}\min(\max( X, 0), 1)] \times \min \left (\frac{e^{-f(z)}}{e^{-f(\theta)}} \times e^{\|z- \theta\|_{\Phi(\theta)}^2 - \|\theta- z\|_{\Phi(z)}^2}, \, \, 1 \right)$ denote the probability that the proposed step $z$ is accepted at any iteration $i$ of Algorithm \ref{alg_Soft_Dikin_Walk_sparse}.
Then for every $i \in \{1,\ldots, T\}$, we have by \eqref{eq_W1} that
 \begin{equation}\label{eq_W1b}
      \left|q_i - \frac{1}{2} \frac{\frac{\mathrm{det}(\Phi(z))}{\mathrm{det}(\Phi(\theta))}}{\frac{\mathrm{det}(\Phi(\theta))}{\mathrm{det}(\Phi(z))} + \frac{\mathrm{det}(\Phi(z))}{\mathrm{det}(\Phi(\theta))}} \times \min \left (\frac{e^{-f(z)}}
{e^{-f(\theta)}} \times e^{\|z- \theta\|_{\Phi(\theta)}^2 - \|\theta- z\|_{\Phi(z)}^2}, \, \, 1 \right) \right| \leq 3\mathcal{N} \gamma
\end{equation}
Let $\mathcal{X}_0, \mathcal{X}_1, \ldots$ be the Markov chain where at the beginning of each iteration $i+1$ of  Algorithm \ref{alg_Soft_Dikin_Walk_sparse}, $\mathcal{X}_i = \theta$.
Moreover, let $\mathcal{Y}_0, \mathcal{Y}_1, \ldots$ be the Markov chain defined in Algorithm \ref{alg_2}.
We define a probabilistic coupling between these two Markov chains.
First, define a coupling between the random vector $\xi \equiv \xi_i$ sampled at every iteration $i+1$ of Algorithm \ref{alg_Soft_Dikin_Walk_sparse} and the random vector $\zeta_i$ sampled at every iteration $i$ of Algorithm \ref{alg_2}, such that $\xi_i = \zeta_i$ for all $i \geq 0$.
Then, by \eqref{eq_W1b}, there exists a coupling between the Markov chains $\mathcal{X}_0, \mathcal{X}_1, \ldots$  and $\mathcal{Y}_0, \mathcal{Y}_1, \ldots$ such that with probability at least $1- 3 T\mathcal{N} \gamma$,
\begin{equation}\label{eq_W2}
    \mathcal{X}_i = \mathcal{Y}_i \qquad \forall i \in \{0,1,\ldots, T\}.
\end{equation}
Thus, letting $\mu$ denote the distribution of the output $\mathcal{X}_T$ of Algorithm \ref{alg_Soft_Dikin_Walk_sparse}, and letting  $\nu_T$ denote the distribution of $\mathcal{Y}_T$, we have by \eqref{eq_W2} that
\begin{equation}\label{eq_W3}
    \| \mu -  \nu_T \|_{\mathrm{TV}} \leq  3 T\mathcal{N} \gamma \leq \frac{1}{2}\delta.
\end{equation}
Moreover, by Lemma \ref{Lemma_TV_output}, we have that the distribution $\nu_T$ of $\mathcal{Y}_T$ satisfies
\begin{equation}\label{eq_W4}
    \|\nu_T - \pi \|_{\mathrm{TV}} \leq \frac{1}{2} \delta.
\end{equation}
Thus, we have that the distribution $\mu$ of the output of Algorithm \ref{alg_Soft_Dikin_Walk_sparse} satisfies
\begin{equation*}
    \|\mu - \pi \|_{\mathrm{TV}} \leq \| \mu -  \nu_T \|_{\mathrm{TV}} + \|\nu_T - \pi \|_{\mathrm{TV}} \leq \delta.
\end{equation*}
\end{proof}

\section{Conclusion}
Our main result improves on the runtime bounds of a line of previous works for the problem of sampling from a log-Lipschitz or log-smooth log-concave distribution on a polytope (Table 1).
Key to our result is showing that the regularized log-barrier Hessian changes slowly at each step of the soft-threshold Dikin walk, which allows us to deploy fast linear solvers from the interior point method literature to reduce the per-iteration complexity of the walk.

The use of fast linear solvers allows us to achieve a per-iteration complexity $\mathrm{nnz}(A) + d^2$ for the soft-threshold Dikin walk that is nearly linear in the input complexity of the problem when $A$ is a dense matrix; when $A$ is sparse, the dependence of the per-iteration complexity on $\mathrm{nnz}(A)$ is nearly-linear as well.
On the other hand, it would be interesting to see if the $d^2$ term can removed by deploying a different choice of fast linear solver for the Hessian.

Finally, while the fastest-known algorithms for minimizing linear functions on a polytope $K$ have runtime (nearly) equal to the matrix multiplication time   \cite{cohen2021solving, jiang2021faster}, the best runtime bounds for sampling from a distribution $\pi \propto e^{-f}$ on a polytope (even when $f$ is uniform) are a larger polynomial in the dimension $d$.
Thus, it would be interesting to see if the overall runtime (i.e., the number of steps times the per-step complexity) of sampling methods can be improved to match the runtime of linear programming methods.

\section*{Acknowledgments}
NV was supported in part by an NSF CCF-2112665 award. OM was
supported in part by an NSF CCF-2104528 award and a Google Research Scholar award.
The authors thank Yin Tat Lee and Sushant Sachdeva for valuable feedback and discussions.

\bibliographystyle{plain}

\bibliography{DP}

\appendix

\section{Additional comparisons of approaches to determinant computation} \label{appendix_comparison}

In this section, we explain the key differences between our randomized estimator for the determinant term, and the estimator of  \cite{laddha2020strong}, and why these differences are necessary.

\paragraph{Approach of  \cite{laddha2020strong}.} In  \cite{laddha2020strong}, the determinant ratio is first estimated by plugging in independent random estimates for the log-determinant with mean $\log \mathrm{det} H(\theta) - \log \mathrm{det} H(z)$ into the Taylor expansion for the exponential function $e^t$. This gives an estimate $V$ with mean $\mathbb{E}[V] = \frac{\mathrm{det}(H(\theta))}{\mathrm{det}(H(z))}$ (their Lemma 4.3).
   Roughly speaking, the authors then suggest plugging this estimate $V$
 (and another estimate $\hat{V}$ with mean $\frac{\mathrm{det}H(z)}{\mathrm{det}(H(\theta))}$) into a smooth function $g(t,s)$, to compute an estimate for their Metropolis acceptance rule $\frac{p(z \rightarrow \theta)}{p(z \rightarrow \theta) + p(\theta \rightarrow z)} = g(\frac{\mathrm{det}(H(\theta))}{\mathrm{det}(H(z))}, \, \, \frac{\mathrm{det}(H(z))}{\mathrm{det}(H(\theta))})$.
  Each time their Markov chain proposes an update $\theta \rightarrow z$, it is accepted with probability $g(V, \hat{V})$.

However, the randomized estimator $g(V, \hat{V})$ is not an unbiased estimator for the Metropolis rule $$g(\frac{\mathrm{det}(H(\theta))}{\mathrm{det}(H(z))}, \, \, \frac{\mathrm{det}(H(z))}{\mathrm{det}(H(\theta))}),$$ even though $\mathbb{E}[V] = \frac{\mathrm{det}(H(\theta))}{\mathrm{det}(H(z))}$ and $\mathbb{E}[\hat{V}] = \frac{\mathrm{det}(H(z))}{\mathrm{det}(H(\theta))}$. 
 This is because, in general, $\mathbb{E}[g(V,\hat{V})] \neq g(\mathbb{E}[V], \mathbb{E}[\hat{V}])$ since $g(V,\hat{V})$ is not a linear function in $V$ or $\hat{V}$.
  If one utilizes this randomized estimator in the acceptance rule of the Dikin walk Markov chain, it can cause the Markov chain to generate samples from a distribution not equal to the (uniform) target distribution. 
   This problem does not seem to be easily fixable and requires a different approach.

\paragraph{Our approach to estimating the determinantal term.} To obtain an unbiased estimator for the determinantal term, our algorithm bypasses the use of the Taylor expansion for the exponential function $e^t$ used in  \cite{laddha2020strong}, and instead uses two different infinite series expansions for the sigmoid function (a Taylor series expansion $\mathrm{sigmoid}(t) = \sum_{i=0}^{\infty} c_i t^i$ centered at $0$ with the region of convergence $(\-\pi, \pi)$, and a series expansion ``at $+\infty$'' with the region of convergence $(0, + \infty)$ which is polynomial in $e^{-t}$) to compute an unbiased estimate with mean equal to a (different) Metropolis acceptance rule. The Metropolis rule in our algorithm is proportional to $\mathrm{sigmoid} \left( \frac{1}{2} \log \det(\Phi(z)) - \frac{1}{2} \log \det(\Phi(\theta)) \right)$.
  To generate a randomized estimator for this acceptance rule, we compute independent samples  $Y_1, \ldots, Y_n$ with mean $\log \det(\Phi(\theta)) - \log \det(\Phi(z))$, and plug these samples into one of the two series expansions, e.g. $\sum_{i=0}^{N} c_i Y_1 \cdot \ldots \cdot Y_i$, for some small number of terms $N$. To select which series expansion to use, our algorithm chooses a series expansion whose region of convergence contains a ball of radius $\frac{1}{8}$ centered at $Y_1$.

Since  $Y_1, \ldots, Y_N$ are independent, roughly speaking, the mean of our estimator is proportional to 
    \[
    \mathbb{E} \left[ \sum_{i=0}^N c_i Y_1 \cdot \cdots \cdot Y_i \right] = \sum_{i=0}^N c_i \mathbb{E}[Y_1] \cdot \cdots \cdot \mathbb{E}[Y_i] = \sum_{i=0}^N c_i (\log \det(\Phi(\theta)) - \log \det(\Phi(z)))^i,
    \]
 (or to a similar quantity if the other series expansion is selected). We show that, w.h.p., the choice of series expansion selected by our algorithm converges exponentially fast in $N$ when $Y_1, \ldots, Y_N$ are plugged into this expansion. This implies that, if we choose roughly $N= \log \frac{1}{\gamma}$, our estimator is a (nearly) unbiased estimator with mean within any desired error $\gamma > 0$ of the correct value $\mathrm{sigmoid}\left( \frac{1}{2} \log \det(\Phi(z)) - \frac{1}{2} \log \det(\Phi(\theta)) \right)$. We then show that this error $\gamma$ is sufficient to guarantee that our Markov chain samples within total variation error $O(\delta)$ from the correct target distribution if one chooses $\gamma = O(\text{poly}(\delta, 1/d, 1/L))$.

\end{document}